\newif\if@restonecol
\newtheorem{lemma}{Lemma}
\newtheorem{obs}{{\bf Observation}}
\newtheorem{theorem}{{\bf Theorem}}
\title{Formation of General Position by Asynchronous Mobile Robots}
\author{
\alignauthor
S. Bhagat\\
       \affaddr{ACM Unit}\\
       \affaddr{Indian Statistical Institute}\\
       \affaddr{Kolkata-700108}\\
       \email{\small{subhash.bhagat.math@gmail.com}}
\alignauthor
S. Gan Chaudhuri\\
       \affaddr{Department of Information Technology}\\
       \affaddr{Jadavpur University}\\
       \affaddr{Kolkata-700032}\\
       \email{\small{srutiganc@it.jusl.ac.in}}
\alignauthor K. Mukhopadhyaya\\
       \affaddr{ACM Unit}\\
       \affaddr{Indian Statistical Institute}\\
       \affaddr{Kolkata-700108}\\
       \email{\small{krishnendu@isical.ac.in}}
}
\begin{document}

\maketitle

\begin{abstract}
The traditional distributed model of autonomous, homogeneous, mobile point robots usually assumes that the robots do not create any visual obstruction for the other robots, i.e., the robots are see through. In this paper, we consider a slightly more realistic model, by
incorporating the notion of {\it obstructed visibility} (i.e., robots
are not see through) for other robots. Under the new model of visibility, a robot may not have the full view of its surroundings. Many of the existing algorithms demand that each robot should have the complete knowledge of the positions of other robots. Since, vision is the only mean of their
communication, it is required that the robots are in {\it general position} (i.e., no three robots are collinear). We consider {\it asynchronous} robots. They also do not have common {\it chirality} (or any agreement on a global coordinate system). In this paper, we present a distributed algorithm for obtaining a general position for the robots in finite time from any arbitrary configuration. The algorithm also assures collision free motion for each robot. This algorithm may also be used as a preprocessing module for many other subsequent tasks performed by the robots.
\end{abstract}

\keywords{Asynchronous, oblivious, obstructed visibility, general position.}

\section{Introduction}
The study of a set of autonomous mobile robots, popularly known as swarm robots or multi robot system, is an emerging 
research topic in last few decades. Swarm of robots is a set of autonomous robots that
have to organize themselves in order to execute a specific task in collaborative manner. Various problems in several directions, have been studied in the framework of swarm robots, among the others distributed computing is an important area with this swarm robots. This paper explores that direction.     
\subsection{Framework}
The traditional distributed model \cite{Peleg2005} for multi robot system, represents the mobile entities by distinct points located in the
Euclidean plane. The robots
are anonymous, indistinguishable, having no direct
means of communication. They have no common agreement in directions, orientation and unit distance. 
  Each robot has sensing capability, by {\em vision}, which enables it to determine
  the position (within its own coordinate system) of  the other robots.
The robots operate in rounds by executing {\em Look-Compute-Move} cycles. All robots may or may not be active at all rounds.
In a round, when becoming active,
    a robot gets a snapshot of its
surroundings (Look) by its sensing capability. This snapshot is used to compute 
a destination point (Compute) for this robot. Finally, it
moves towards this destination (Move). The robot either directly reaches destination or moves at-least a small distance towards the destination. The choice of
active robot in each round is decided by an
adversary. However, it is guaranteed that each robot will become active in finite time.
All robots execute the same algorithm.
The robots are oblivious, i.e., at the beginning of each cycle, they forget
their past observations and computations \cite{FlPS12}.
Depending on the activation schedule and the
duration of the cycles, three models are defined. In the {\em fully-synchronous} model, all
robots are activated simultaneously. As a result, all robots acts on same data. The {\em semi-synchronous} model is
like the fully synchronous,
except that the set of robots to be activated is chosen at random. As a result, the active robots act on same data. No assumption, is made on
timing of activation and duration of the cycles for {\em asynchronous} model. However, the time and durations are considered to be finite. 

Vision and mobility enable
    the robots to communicate and coordinate their actions by sensing their relative positions.
    Otherwise, the robots are silent and have no explicit message passing.
   These restrictions enable the robots to be deployed in extremely harsh
environments where communication is not possible, i.e an underwater
deployment or a military scenario where wired or wireless communications are
impossible or can be obstructed or erroneous. 

\subsection{Earlier works}
  Majority of the investigations\cite{EfP07, Peleg2005} on mobile robots assume that their visibility is unobstructed or full, i.e., if two robots $A$
and $B$ are located at $a$ and $b$, they
can see each other though other robots lie
in the line segment $\overline{ab}$ at that time. Very few observations on obstructed visibility (where A and B are not mutually visible if there exist other robots on the line segment $\overline{ab}$) have been made in different models; such as, (i) the robots in the one dimensional
space \cite{CoP08}; (ii) the robots with visible lights \cite{DasFPSY12,DasFPSY14} and (iii) the unit disc robot called {\it fat
robots} \cite{AgGM13, CzGP09}.

The first model studied the uniform spreading of robots  on a line \cite{CoP08}.
In the second model, each agent is provided with a local externally  visible {\em  light},
which is used as colors \cite{DasFPSY12,DasFPSY14,EfP07,FlSVY13,Peleg2005,Vi13,AFGSV14}. The
robots implicitly communicate with each other using these colors as indicators of their states.
In the third model, the robots are not points but unit discs \cite{BoKF12,CzGP09,AgGM13}) and collisions among robots are
allowed. 

Obstructed visibility have been addressed recently in \cite{AFGSV14} and \cite{AFPSV14}. In \cite{AFGSV14} the authors have proposed algorithm for robots in light model. Here, the robots starting from any arbitrary configuration form a circle which is itself an unobstructed configuration. The presence of a constant number of visible light(color) bits in each robot, implicitly help the robots in communication and storing the past configuration.  In \cite{AFGSV14}, the robots obtain a obstruction free configuration by getting as close as possible. Here, the robots do not have light bits. However, the algorithm is for semi-synchronous robots.

\subsection{Our Contribution}
In this paper, we propose algorithm to remove obstructed visibility by making of general configuration by the robots. The robots start from arbitrary distinct positions in the
plane and reach a configuration when they all see
each other. The robots are asynchronous, oblivious, having no agreement in coordinate systems. The obstructed visibility model is no doubt improves the traditional model of multi robot system by incorporating real-life like characteristic.  The problem is also a preliminary step for any
subsequent tasks which require complete visibility.

The organization of the paper is as follows: Section \ref{model}, defines the assumptions of the robot model used in this paper and presents the definitions and notations used in the algorithm. Section \ref{algo} presents an algorithm for obtaining general position by asynchronous robots. We also furnish the correctness of our algorithm in this section. Finally in section \ref{con} we conclude by providing the future directions of this work.

\section{Model and Definitions}
\label{model}
Let $\mathcal{R} = \{r_1 \ldots, r_n\}$ be a set of $n$ homogeneous robots represented by points. Each robot can sense (see) $360^o$ around itself up to an unlimited radius. 
However, they obstruct the visibility of other robots. The robots execute \emph{look-compute-move} cycle in {\it asynchronous} manner. 
They are {\it oblivious} and have {\it no direct communication} power. The movement of the robots are {\it non-rigid}, i.e., 
a robot may stop before reaching its destination. However, a robot moves at-least a minimum distance $\delta >0$ towards its destination. 
This assumption assures that a robot will reach its destination in finite time. Initially the robots are positioned in distinct locations and are stationary. 
Now we present some notations and conventions which will be used throughout the paper.
\begin{itemize}

\item \textbf{Position of a robot:} $r_i \in \cal R$ represents a location of a robot in $\cal R$ at some time, i.e., $r_i$ is a position  occupied by a robot in $\cal R$ at certain time. To denote a robot in $\cal R$ we refer by its position $r_i$.
\item \textbf{Measurement of angles:} By {\it an angle between two line segments}, if otherwise not stated, we mean the angle made by them which is less than equal to $\pi$.
\item $\mathbf{\mathbf{\cal V}(r_i):}$ For any robot $r_i$, we define the vision of $r_i$,  ${\cal V}(r_i)$, as the set of robots visible to $r_i$ (excluding $r_i$ itself). The robots in ${\cal V}(r_i)$ can also be in motion due to asynchronous scheduling.

If we sort the robots in ${\cal V}(r_i)$ by angle at $r_i$, w.r.t. $r_i$
and connect them in that order, we get  a star-shaped polygon, denoted by $STR(r_i)$. Note that $r_j \in {\cal V}(r_i) $ if and only if $r_i \in {\cal V}(r_j)$ (Figure \ref{Visionofarobot}).

\begin{figure}[h]
    \centering
   \includegraphics[scale = 1]{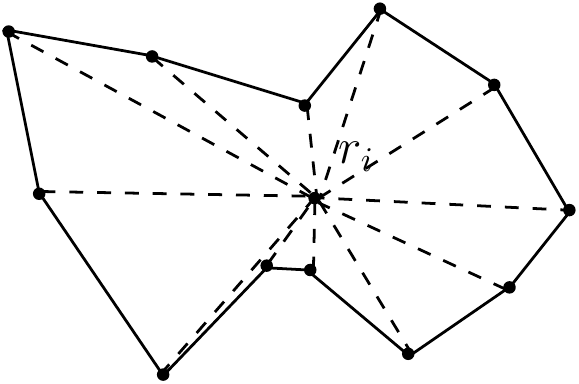}
   \caption{An example of $\mathbf{STR(r_i)}$}
   \label{Visionofarobot}
  \end{figure}

\item $\mathbf{ CR(r_i):}$ 
 This is the set of line segments joining $r_i$ to all its neighbors or all robots in ${\cal V}(r_i)$. $CR(r_i)=\{\overline{r_ir_j}: r_j \in {\cal V}(r_i)\}$ (Figure \ref{CR(r_i)}).
\begin{figure}[h]
    \centering
   \includegraphics[scale =1]{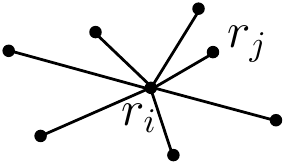} 
   \caption{An example of $\mathbf{CR(r_i)}$}
   \label{CR(r_i)}
 \end{figure}

\item $\boldsymbol{\mathcal{L}_{r_ir_j}:}$ Straight line through $r_i$ and $ r_j: r_j \in {\cal V}(r_i)$ (Figure \ref{DISP})
 \item \textbf{COL$(r_i)$:} ${COL(r_i)}$ denotes the set of robots for which $r_i$ creates visual obstructions.
\item $\mathbf{DISP(r_ir_j):}$ When a  robot $r_i$ moves to new position $\hat{r_i}$, we call   $\angle{r_ir_j\hat{r_i}}$ as the angle of displacement of $r_i$ w.r.t. $r_j$ and  denote it by $DISP(r_ir_j)$ (Figure \ref{DISP}). 

 \begin{figure}[h]
    \centering
   \includegraphics[scale = 1]{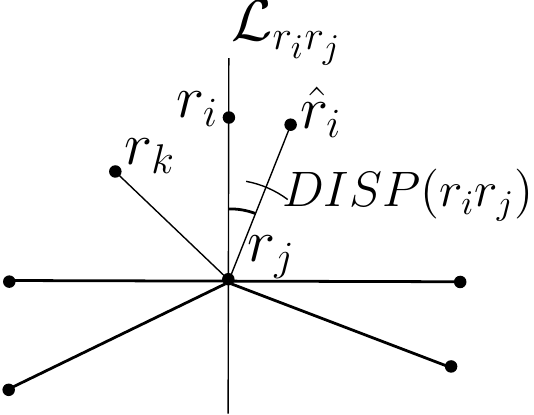}
   \caption{Examples of $\mathbf{\mathcal{L}_{r_ir_j}}$, $\mathbf{DISP(r_ir_j)=\angle{r_ir_j\hat r_i}}$, $\mathbf{COL(r_i)=\{r_l,r_m\}}$}
   \label{DISP}
  \end{figure}

\begin{figure}[h]
    \centering
   \includegraphics[scale = .75]{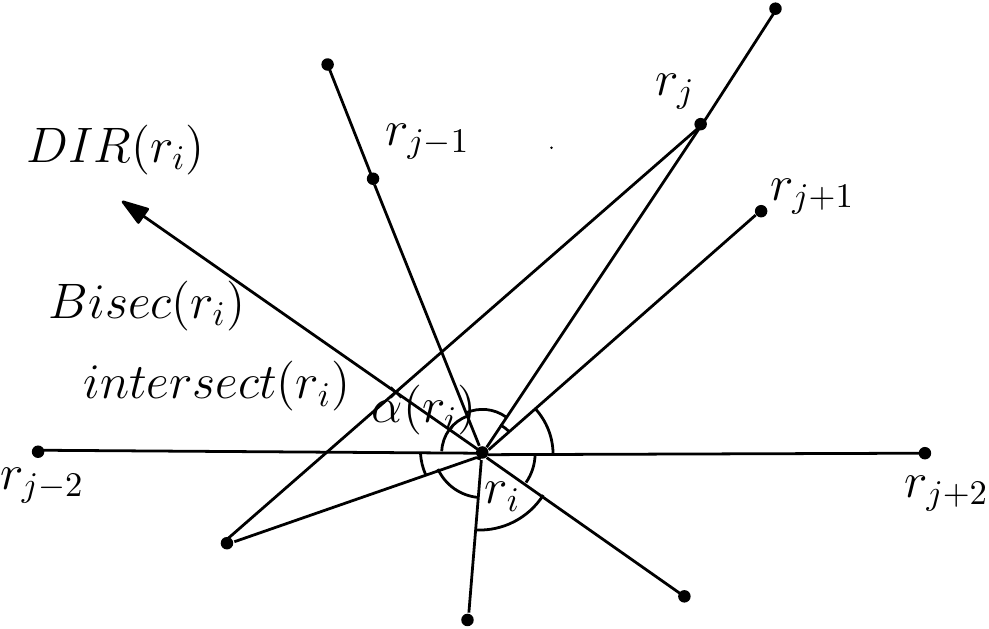}
   \caption{Examples of $\mathbf{\Gamma(r_i)}$, $\mathbf{\alpha(r_i)}$, $\mathbf{Bisec(r_i)}$, $\mathbf{intersect(r_i)}$}
   \label{alpha}
  \end{figure}
\item  $\mathbf{\Gamma(r_i:)}$ Set of angles $\angle{r_jr_ir_k}$ where $r_k$ and $r_j$ are two consecutive vertices of $STR(r_i)$ (Figure \ref{alpha}) .

\item $\boldsymbol{\alpha(r_i):}$ Maximum of $\Gamma(r_i)$ if maximum value of $\Gamma(r_i)$ is less than $\pi$ otherwise the $2^{nd}$ maximum of $\Gamma(r_i)$. The tie, if any, is broken arbitrarily (Figure \ref{alpha}).

\item $\mathbf{Bisec(r_i):}$ Bisector of $\alpha(r_i)$. Note that $Bisec(r_i)$ is a ray from $r_i$ towards the angle of consideration (Figure \ref{alpha}).

\item $\mathbf{DIR(r_i):}$ The direction of $Bisec(r_i)$. We say that $DIR(r_i)$ lies on that side of any straight line where infinite end of $DIR(r_i)$ lies (Figure \ref{alpha}).
  
\item $\mathbf{intersect(r_i):}$ We look at the intersection points of $Bisec(r_i)$ and $\mathcal{L}_{jk}$ , $\forall$ $r_j,r_k \in {\cal V}(r_i) $. The intersection point closest to $r_i$ is denoted by $intersect(r_i)$ (Figure \ref{alpha}). 
\item $\mathbf{\Gamma'(r_i)}$:  Set of angles $\angle{r_{i-1}r_jr_i}$ and $\angle{r_ir_jr_{i+1}}$, $\forall r_j \in {\cal V}(r_i)$, 
where  $r_{i-1}$ and $r_{i+1}$ are the two neighbors of $r_i$ on $STR({\cal V}(r_j))$ (Figure \ref{Gamma'}). 
\begin{figure}[h]
    \centering
   \includegraphics[scale = .75]{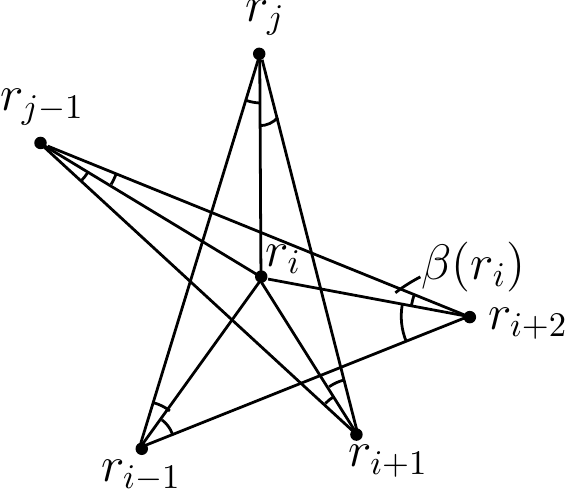}
   \caption{Examples of $\mathbf{\Gamma'(r_i)}$, $\mathbf{\beta(r_i)=\angle{r_{j-1}r_{i+2}r_i}}$}
   \label{Gamma'}
  \end{figure}
\item $\boldsymbol{\beta(r_i):}$ Minimum of $\Gamma(r_i) \cup \Gamma'(r_i)$ (Figure \ref{Gamma'}).
\item $\boldsymbol{\theta(r_i):}$ $\frac{\beta(r_i)}{n^2}$.
\item \textbf{\textit{d}($r_i$)}: Distance between $r_i$ and $intersect(r_i)$.
\item $\mathbf{D(r_i):}$ Distance between $r_i$ and the robot nearest to it.
\item $\boldsymbol{\Delta(r_i):}$ $min \{\frac{d(r_i)}{n^2},D(r_i)Sin(\theta(r_i))\}$.
\item $\boldsymbol{\hat r_i:}$ The point on $Bisec(r_i)$, $\Delta(r_i)$ distance apart from $r_i$ (Figure \ref{C(r_i)}).
\item $\boldsymbol{C(r_i):}$ The circle of radius $\Delta (r_i)$ centered at $r_i$. Note that $\hat r_i$ always 
lies on $C(r_i)$ (Figure \ref{C(r_i)}).
\item $\boldsymbol{T(C(r_i),r_j):}$ Any one of the tangential points of the tangents drawn to $C(r_i)$ from $r_j$ (Figure \ref{C(r_i)}).
\begin{figure}[h]
    \centering
   \includegraphics[scale = .75]{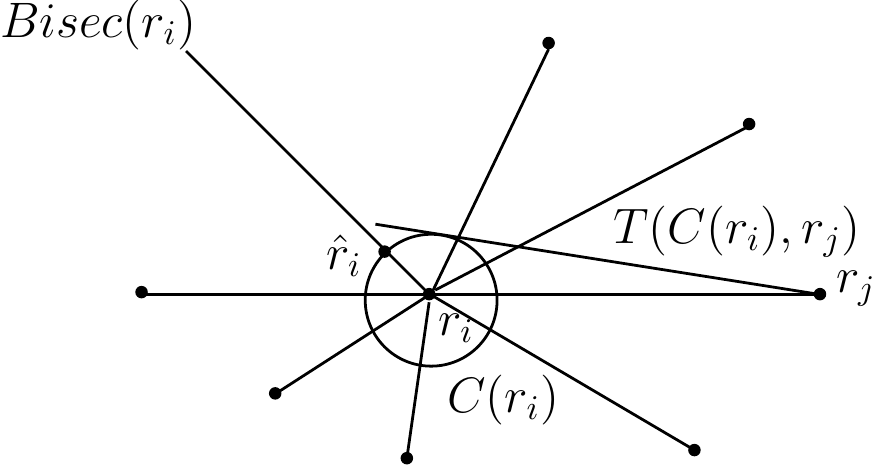}
   \caption{Examples of $\mathbf{C(r_i)}$, $\mathbf{\hat r_i}$, $\mathbf{T(C(r_i),r_j)}$}
   \label{C(r_i)}
  \end{figure}
  \end{itemize}
  
  We classify the robots in $\mathcal{R}$ depending upon their positions with respect to $\mathcal{CH(R)}$ (the convex hull of $\mathcal{R}$), as below:
  \begin{itemize}
   \item \textbf{External vertex robots ($R_{EV}$):}  A set of robots lying on the vertices of $\mathcal{CH(R)}$ . These robots do not obstruct the visibility of any robot in $\cal R$ and hence they 
   do not move during whole execution of the algorithm. 
   Note that, if $r_i$ lies outside of $STR(r_i)$ , then $r_i$ is an external vertex robot.
   \item \textbf{External edge robots ($R_{EE}$):} A set of robots lying on the edges of $\mathcal{CH(R)}$. These robots either block the visibility of external vertex robots or  other robot edge
   robots. 
   Note that, if $r_i$ lies on an edge of $STR(r_i)$, then $r_i$ is an external edge robot.
   \item \textbf{Internal robots ($R_{I}$):} A set of robots lying inside the $\mathcal{CH(R)}$. 
    Note that, if $r_i$ lies within $STR(r_i)$, $r_i$ is an internal robot.
  \end{itemize}

\section{Algorithm for Making of General Position}
\label{algo}

Consider initially robots in ${\cal R}$ are not in general position. 
Our objective is to move the robots in ${\cal R}$ in such a way that after a finite number of movements of the robots in ${\cal R}$, it will be in general position. In order to do so, our approach is to move the robots which create visual obstructions to the other robots. If a robot $r_i$ lies between two other robots, say $r_p$ and $r_q$ such that $r_i$, $r_p$ and $r_q$ are in straight line, then $r_i$ is selected for movement. The destination of $r_i$, say  $T(r_i)$, is computed in such a way that, there always exists a $r_j \in \cal R$ (where $r_j$ does not have full visibility), such that when $r_i$ moves, the cardinality of the set of visible robots of $r_j$ increases. Since, the number of robots are finite, the number of robots having partial visibility, is also finite. Our algorithm assures that at each round at-least one robot with partial visibility will have full visibility. This implies that in finite number of rounds all robots will achieve full visibility, hence, the robots will be in general 
position in finite time.  

\subsection{Computing the destinations of the robots}

A collinear middle robot is selected to move from its position. A robot finds its destination for movement using algorithm $ComputeDestination(r_i)$. A robot $r_i$, selected for moving, moves along the bisector of the minimum angle created at $r_i$ by the robots in ${\cal V}(r_i)$. The destination is chosen in such a way that $r_i$ will not block the vision of any $r_j \in {\cal V}(r_i)$, where the vision of $r_j$ was not initially blocked by $r_i$, throughout the paths towards its destination. Each movement of $r_i$ breaks at least one initial collinearity.    

\begin{algorithm}
\KwIn{$r_i \in R $ with $COL(r_i) \ne \phi$.}
\KwOut{a point on $ Bisec(r_i)$.}
\begin{enumerate}
\item Compute  $\alpha(r_i)$, $ Bisec(r_i)$, $\beta(r_i)$, $\theta(r_i)$, $D(r_i)$, \\
\vspace{.1in}
\item \textbf{Case 1:} $\beta(r_i) \ne 0$,\\ 
$\Delta(r_i) \leftarrow min \{\frac{d(r_i)}{n^2},D(r_i)Sin(\theta(r_i))\}$\\
\vspace{.1in}
\item \textbf{Case 2:} $\beta(r_i)=0$,\\
$\Delta(r_i) \leftarrow D(r_i)$

\vspace{.1in}
\item Compute the point $\hat r_i$ on $ Bisec(r_i)$, $\Delta(r_i)$ distance apart from $r_i$\;

\item return $\hat r_i$\;
\end{enumerate}

\caption{ComputeDestination()}
\end{algorithm}

\paragraph{\bf Proof of Correctness of algorithm ComputeDestination()}
Correctness of the algorithm is established by following observations, lemmas.

\begin{figure}[h]
    \centering
   \includegraphics[scale = .65]{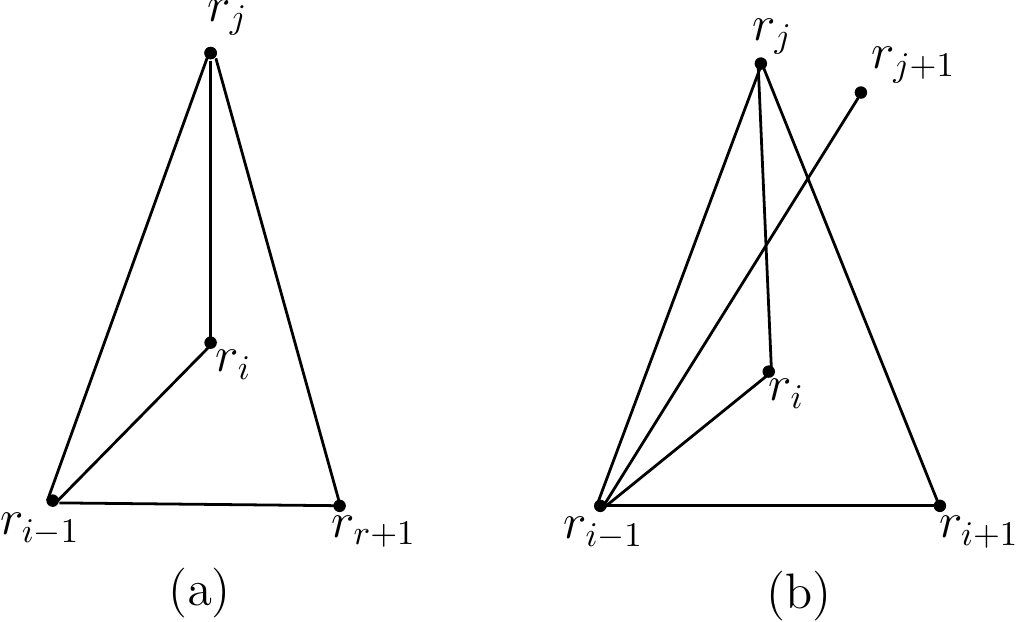}
   \caption{An example for lemma 1}
   \label{Lemma-1}
  \end{figure}
  
\begin{lemma}
\label{beta}
 $\beta(r_i)\le \frac{\pi}{3}$.
\end{lemma}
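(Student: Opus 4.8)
The plan is to exhibit a single triangle whose three interior angles are each dominated by an element of $\Gamma(r_i)\cup\Gamma'(r_i)$, and then to invoke the elementary fact that the interior angles of a triangle sum to $\pi$, so the smallest of them is at most $\pi/3$. Since $\beta(r_i)$ is the minimum of $\Gamma(r_i)\cup\Gamma'(r_i)$, producing one element of that set below $\pi/3$ suffices.

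First I would fix a witness triangle. As the lemma is applied to a robot with $COL(r_i)\ne\phi$, the robot $r_i$ lies strictly between two robots and hence sees a robot in each of two opposite directions, so $\mathcal{V}(r_i)$ contains at least two robots and $\Gamma(r_i)\ne\emptyset$. Pick two robots $r_j,r_k\in\mathcal{V}(r_i)$ that are consecutive vertices of $STR(r_i)$; then $\angle r_jr_ir_k\in\Gamma(r_i)$ by definition. Consider $\triangle r_ir_jr_k$. Its three interior angles $\angle r_jr_ir_k$, $\angle r_ir_jr_k$ and $\angle r_ir_kr_j$ add to $\pi$, so the smallest of them is at most $\pi/3$.

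It remains to show that wherever this minimum is attained, it is at least some element of $\Gamma(r_i)\cup\Gamma'(r_i)$. If the minimum sits at $r_i$, the angle is $\angle r_jr_ir_k\in\Gamma(r_i)$ itself, and we are done. If it sits at $r_j$, I would argue that the neighbor gap of $r_i$ at $r_j$ on the side toward $r_k$ is no larger than $\angle r_ir_jr_k$: since $r_j,r_k$ are angularly consecutive as seen from $r_i$, no robot lies in the open interior of $\triangle r_ir_jr_k$, and the ray $r_j\!\to\! r_k$ (or, if $r_k$ is hidden from $r_j$, the ray toward the nearest robot blocking it, which points in the same direction) meets a robot visible to $r_j$ in the closed angular sector between $r_i$ and $r_k$. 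Hence $r_i$ has a neighbor $r_m$ on $STR(r_j)$ toward $r_k$ with $\angle r_ir_jr_m\le\angle r_ir_jr_k$, and $\angle r_ir_jr_m\in\Gamma'(r_i)$ by definition of $\Gamma'$. The case where the minimum sits at $r_k$ is symmetric. In every case some element of $\Gamma(r_i)\cup\Gamma'(r_i)$ is at most the smallest triangle angle, which is at most $\pi/3$, giving $\beta(r_i)\le\pi/3$.

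The step I expect to be the main obstacle is the domination claim for the angle at $r_j$: I must verify that the neighbor-gap angle recorded in $\Gamma'(r_i)$ genuinely lies inside the triangle's angle at $r_j$ and is therefore no larger. This rests on the visibility argument above, namely that angular consecutiveness of $r_j,r_k$ at $r_i$ forces the interior of $\triangle r_ir_jr_k$ to be empty, and that — whether or not $r_k$ is itself visible from $r_j$ — a robot of $r_j$ always exists in the closed sector between $r_i$ and $r_k$, so the required neighbor $r_m$ exists and obeys the inequality. The only remaining situations are the degenerate collinear ones in which a triangle angle equals $0$, and there the bound $\beta(r_i)\le\pi/3$ is immediate from the corresponding zero angle already lying in $\Gamma(r_i)\cup\Gamma'(r_i)$.
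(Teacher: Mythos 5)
Your proof is correct and follows essentially the same route as the paper: both exhibit a witness triangle whose three interior angles each dominate some element of $\Gamma(r_i)\cup\Gamma'(r_i)$, and both conclude from the fact that a triangle's smallest angle is at most $\pi/3$ (the paper phrases this as the equilateral case being extremal). The only difference is the dual choice of witness — you anchor the triangle at a consecutive pair $r_j,r_k$ of $STR(r_i)$ so the angle at $r_i$ is exactly in $\Gamma(r_i)$ and the angles at $r_j,r_k$ need the domination argument, whereas the paper takes $r_j\in\mathcal{V}(r_i)$ together with a neighbor $r_{i-1}$ of $r_i$ on $STR(\mathcal{V}(r_j))$ so the angle at $r_j$ is exactly in $\Gamma'(r_i)$ and the other two angles need it — and your treatment of the degenerate collinear configurations is at the same (informal) level as the paper's bare assertion that $\beta(r_i)=0$ there.
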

\begin{proof}
 If all the robots lie on a straight line, then $\beta(r_i) =0$. Suppose there are at least three 
 non-collinear robots. For three robots forming a triangle, $\beta(r_i)$ is maximum when the triangle is equilateral. 
 For all other cases,  consider the triangle formed by $r_i,r_j$ and $r_{i-1}$ where
 $r_j$ is any robot in ${\cal V}(r_i) $ and $r_{i-1}$ is a neighbor of $r_i$ on $STR({\cal V}(r_j))$.
If $r_j$ is also a neighbor of $r_i$ on ${\cal V}(r_{i-1}) $ (Figure \ref{Lemma-1}(a)), then 
$\angle r_ir_jr_{i-1}$ and $\angle r_ir_{i-1}r_j$ are in $\Gamma'(r_i)$ and either $\angle r_jr_ir_{i-1}$
or an angle less than it is in $\Gamma(r_i)$. On the other hand, if $r_j$ is not a neighbor of $r_i$ on ${\cal V}(r_{i-1}) $ (Figure \ref{Lemma-1}(b)), 
then instead of $\angle r_ir_{i-1}r_j$, an angle less than it, is in $\Gamma'(r_i)$. In all cases, $\beta(r_i)$ is less than the minimum of the angles of
the triangle formed by $r_i,r_j$ and $r_{i-1}$. Hence, $\beta(r_i)\le \frac{\pi}{3}$.

\end{proof}

 \begin{obs}
 \label{Max-DISP}
  Maximum value of $DISP(r_ir_j)$, denoted by $Max(DISP(r_ir_j)$, is attained when $\hat r_i$ coincides with one of the tangential points $T(C(r_i),r_j)$. 
\end{obs}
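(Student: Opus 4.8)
The plan is to treat this as a pure optimization over the position of $\hat r_i$ on the circle $C(r_i)$: the center $r_i$ and the external point $r_j$ are held fixed, and we ask for which $\hat r_i \in C(r_i)$ the angle $DISP(r_ir_j)=\angle r_i r_j \hat r_i$ is largest. First I would record the two lengths that stay constant as $\hat r_i$ travels around the circle, namely $|r_i\hat r_i|=\Delta(r_i)$ (the radius) and $|r_ir_j|$ (the distance from the center to $r_j$), and note that $r_j$ lies outside $C(r_i)$ since $|r_ir_j|\ge D(r_i)\ge\Delta(r_i)$, with the inequality $\Delta(r_i)<D(r_i)$ holding in the non-degenerate case $\beta(r_i)\neq 0$.

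Next I would work inside the triangle $r_i\hat r_i r_j$ and apply the law of sines, obtaining $\sin\big(DISP(r_ir_j)\big)=\frac{|r_i\hat r_i|}{|r_ir_j|}\sin\big(\angle r_i\hat r_i r_j\big)=\frac{\Delta(r_i)}{|r_ir_j|}\sin\big(\angle r_i\hat r_i r_j\big)$. Since the leading ratio is a constant independent of where $\hat r_i$ sits on the circle, $\sin\big(DISP(r_ir_j)\big)$ is maximized exactly when $\sin\big(\angle r_i\hat r_i r_j\big)=1$, i.e. when $\angle r_i\hat r_i r_j=\frac{\pi}{2}$. Because $r_j$ is strictly outside the circle, $DISP(r_ir_j)$ never exceeds $\arcsin\!\big(\Delta(r_i)/|r_ir_j|\big)<\frac{\pi}{2}$ and so stays acute; over this acute range $\sin$ is increasing, so maximizing $\sin\big(DISP(r_ir_j)\big)$ is the same as maximizing $DISP(r_ir_j)$ itself.

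Finally I would translate the optimality condition back into the stated form. The equality $\angle r_i\hat r_i r_j=\frac{\pi}{2}$ says that $\overline{\hat r_i r_j}$ is perpendicular to the radius $\overline{r_i\hat r_i}$ at $\hat r_i$, which is precisely the condition that the line $r_j\hat r_i$ is tangent to $C(r_i)$ with $\hat r_i$ its point of tangency; that is, $\hat r_i=T(C(r_i),r_j)$. By the symmetry of the two tangent lines from $r_j$, both tangential points realize the same extremal angle, which is why the statement refers to ``one of'' them. The only points needing care are this perpendicular-equals-tangent identification and the acuteness argument used to pass from $\sin\big(DISP(r_ir_j)\big)$ back to $DISP(r_ir_j)$; I expect the main (mild) obstacle to be merely confirming that $r_j$ genuinely lies outside $C(r_i)$, so that the tangent configuration is available in the first place.
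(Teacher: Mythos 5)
Your proof is correct, but note that the paper itself never proves this statement: it is labelled an \emph{Observation} and asserted without argument, so there is no paper proof to match, and your law-of-sines derivation supplies the missing justification. The core is sound: with $r_i$ and $r_j$ fixed and $\hat r_i$ ranging over $C(r_i)$, the law of sines in $\triangle r_i\hat r_ir_j$ gives $\sin(DISP(r_ir_j))=\frac{\Delta(r_i)}{|\overline{r_ir_j}|}\sin(\angle r_i\hat r_ir_j)$ with constant prefactor, the maximum occurs at $\angle r_i\hat r_ir_j=\frac{\pi}{2}$, and perpendicularity of $\overline{\hat r_ir_j}$ to the radius at $\hat r_i$ is exactly tangency, with the two tangential points symmetric. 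Two small points to tighten. First, your passage from maximizing $\sin(DISP(r_ir_j))$ back to maximizing $DISP(r_ir_j)$ is stated slightly circularly: from $\sin(DISP(r_ir_j))\le \Delta(r_i)/|\overline{r_ir_j}|$ alone, the angle could a priori lie in the obtuse branch $[\pi-\arcsin(\Delta(r_i)/|\overline{r_ir_j}|),\pi)$; exclude it directly by observing that the angle at $r_j$ is opposite the side $\overline{r_i\hat r_i}$ of length $\Delta(r_i)<|\overline{r_ir_j}|$, so it is strictly smaller than the angle at $\hat r_i$ and hence cannot be the (unique, longest-side) obtuse angle of the triangle. Second, your caveat about $r_j$ lying strictly outside $C(r_i)$ is the right thing to flag: in Case 2 of $ComputeDestination()$ (when $\beta(r_i)=0$) the algorithm sets $\Delta(r_i)=D(r_i)$, so the nearest robot may lie \emph{on} $C(r_i)$ and the tangent configuration degenerates (the supremum $\frac{\pi}{2}$ is then not attained); the paper's later uses of the observation (Lemmas \ref{theta} and \ref{right-triangle-2}) operate with the non-degenerate bound $\Delta(r_i)\le D(r_i)\sin(\theta(r_i))$, consistent with the hypothesis under which you prove the claim.
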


  \begin{lemma}
  \label{theta}
For any $r_i$, $DISP(r_ir_j) \le  \theta(r_i) $ $ \forall$  $r_j$.
\end{lemma}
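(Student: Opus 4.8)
The plan is to combine Observation \ref{Max-DISP} with the definition of $\Delta(r_i)$ and Lemma \ref{beta}. By Observation \ref{Max-DISP}, the angle $DISP(r_ir_j)=\angle r_ir_j\hat r_i$ is largest when $\hat r_i$ is the tangent point $T(C(r_i),r_j)$, so it suffices to bound this maximal value, $Max(DISP(r_ir_j))$, by $\theta(r_i)$; the general case $DISP(r_ir_j)\le Max(DISP(r_ir_j))\le\theta(r_i)$ then follows for every $r_j$.

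First I would set up the right triangle. When $\hat r_i=T(C(r_i),r_j)$, the segment $r_j\hat r_i$ is tangent to $C(r_i)$, hence the radius $r_i\hat r_i$ is perpendicular to it and $\angle r_i\hat r_ir_j=\frac{\pi}{2}$. In this right triangle the side opposite the vertex $r_j$ is the radius, of length $\Delta(r_i)$, and the hypotenuse is $r_ir_j$. Writing $|r_ir_j|$ for the distance between $r_i$ and $r_j$, this yields
$$\sin\bigl(Max(DISP(r_ir_j))\bigr)=\frac{\Delta(r_i)}{|r_ir_j|}.$$
(Note that this configuration is legitimate because $\Delta(r_i)\le D(r_i)\sin(\theta(r_i))<D(r_i)\le|r_ir_j|$, so $r_j$ lies strictly outside $C(r_i)$ and the tangents exist.)

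Next I would bound the right-hand side. By definition $\Delta(r_i)\le D(r_i)\sin(\theta(r_i))$, and since $D(r_i)$ is the distance from $r_i$ to its nearest robot we have $|r_ir_j|\ge D(r_i)$ for every $r_j\in{\cal V}(r_i)$. Substituting,
$$\sin\bigl(Max(DISP(r_ir_j))\bigr)\le\frac{D(r_i)\sin(\theta(r_i))}{D(r_i)}=\sin(\theta(r_i)).$$
Finally I would pass from the sines back to the angles, which requires both angles to lie in the range $[0,\frac{\pi}{2}]$ where $\sin$ is increasing. The angle $Max(DISP(r_ir_j))$ is acute, being a non-right angle of a right triangle. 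For $\theta(r_i)$, Lemma \ref{beta} gives $\beta(r_i)\le\frac{\pi}{3}$, so $\theta(r_i)=\frac{\beta(r_i)}{n^2}\le\frac{\pi}{3}<\frac{\pi}{2}$. Monotonicity of $\sin$ on $[0,\frac{\pi}{2}]$ then gives $Max(DISP(r_ir_j))\le\theta(r_i)$, which proves the lemma.

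I expect the main obstacle to be the first step, namely justifying the tangent/right-triangle configuration carefully and identifying the correct opposite side and hypotenuse, together with securing the monotonicity range, for which the bound $\theta(r_i)<\frac{\pi}{2}$ furnished by Lemma \ref{beta} is essential (the argument collapses if $\theta(r_i)$ could exceed $\frac{\pi}{2}$). The degenerate Case 2 ($\beta(r_i)=0$, hence $\theta(r_i)=0$) would need to be checked separately, since there the motion is along a line and the tangent estimate does not apply directly.
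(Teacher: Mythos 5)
Your proof is correct and takes essentially the same route as the paper's: Observation~\ref{Max-DISP} reduces the claim to the tangential configuration, the right-triangle relation $\sin\bigl(Max(DISP(r_ir_j))\bigr)=\frac{\Delta(r_i)}{|\overline{r_ir_j}|}$ together with $\Delta(r_i)\le D(r_i)\sin(\theta(r_i))$ and $|\overline{r_ir_j}|\ge D(r_i)$ yields the sine inequality, and monotonicity of $\sin$ on $[0,\frac{\pi}{2}]$ finishes the argument (the paper phrases the $D(r_i)$ step via the nearest robot $r_k$, which is the same bound). Your explicit flagging of the degenerate case $\beta(r_i)=0$, where Case~2 of $ComputeDestination()$ sets $\Delta(r_i)=D(r_i)$ and the estimate does not apply, is a point the paper's proof silently omits.
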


\begin{proof}
 Let $r_j$ be a robot in ${\cal V}(r_i)$ and $r_k$ a robot closest 
 to $r_i$. By observation $\ref{Max-DISP}$, maximum values of $DISP(r_ir_j)$ and $DISP(r_ir_k)$ are attained 
 at tangential points $T(C(r_i),r_j)$ and $T(C(r_i),r_k)$  respectively. Hence, $DISP(r_ir_j)$ is less than $\frac{\pi}{2}$ for all $j$. By definition,
 \begin{align}
  \frac{\Delta(r_i)}{|\overline{r_ir_k}|} &=sin(max(DISP(r_ir_k)))\nonumber\\
 & \le sin(\theta(r_i))
  \end{align}
Again,
 \begin{equation}
  \frac{\Delta(r_i)}{|\overline{r_ir_j}|}=sin(max(DISP(r_ir_j)))  
  \end{equation}
 
Since $|\overline{r_ir_k}| < |\overline{r_ir_j}|$, from $(1)$ and $(2)$ we have,
 \begin{equation}
 sin(max(DISP(r_ir_j))) \le sin(\theta(r_i)).
 \end{equation}
\\
 $DISP(r_ir_j)$ and $\theta(r_i)$ are in  $[0,\frac{\pi}{2})$ (by lemma \ref{beta}) and $sine$ is an increasing function in $[0,\frac{\pi}{2}]$. From $(3)$ we conclude,
\begin{center}
   $DISP(r_ir_j) \le \theta(r_i)$
\end{center}
 \end{proof}

 Suppose a robot $r_i \in \mathcal{R}$ moves according to our algorithm. We claim that it will never become collinear with any two robots $r_j$ and $r_k$  in $\mathcal{R}$ where  $r_i$, $r_j$ and $r_k$ are not collinear initially. Now we state arguments to prove our claim.

\begin{obs}
 \label{right-triangle-1}
  Let $ABC$ be a right-angled triangle with $\angle ABC= \frac{\pi}{2}$. Let $D$ be a point on the side $AC$ such that $|DC| \le \frac{1}{2}|AC|$. Then,
  \begin{center}
  $\angle BDA \le 2\angle ACB$.
  \end{center}
 
 \end{obs}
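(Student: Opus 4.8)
The plan is to locate the midpoint $M$ of the hypotenuse $AC$ and exploit the standard fact that in a right triangle the midpoint of the hypotenuse is the circumcenter. Since $\angle ABC = \frac{\pi}{2}$, the vertex $B$ lies on the circle with diameter $AC$ and center $M$; hence $|MA| = |MB| = |MC|$. In particular triangle $MBC$ is isosceles with $|MB| = |MC|$, so its base angles are equal: $\angle MBC = \angle MCB = \angle ACB$. Writing $\gamma := \angle ACB$, this gives two equal base angles, each equal to $\gamma$.

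Next I would reduce the target inequality to a statement about a single angle at $B$. Because $D$ lies on the hypotenuse, the points $A$, $D$, $C$ are collinear, so $\angle BDA$ is the exterior angle at $D$ of triangle $BDC$. By the exterior-angle identity, $\angle BDA = \angle DBC + \angle DCB = \angle DBC + \gamma$. Thus proving $\angle BDA \le 2\gamma$ is equivalent to proving the simpler bound $\angle DBC \le \gamma$.

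The hypothesis $|DC| \le \frac{1}{2}|AC| = |MC|$ places $D$ on the subsegment between $M$ and $C$. I would then invoke the elementary betweenness fact that, since $B$ does not lie on the line $AC$, as a point travels along $AC$ from $M$ to $C$ the ray from $B$ through that point rotates monotonically from the direction of $BM$ to the direction of $BC$. Consequently the ray $BD$ lies inside the wedge $\angle MBC$, giving $\angle DBC \le \angle MBC = \gamma$. Combining this with the exterior-angle decomposition yields $\angle BDA = \angle DBC + \gamma \le 2\gamma$, which is exactly the claim, with equality precisely when $D = M$.

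The only delicate point — and the step I would justify most carefully — is the betweenness claim $\angle DBC \le \angle MBC$. It relies on $D$ sitting on the $C$-side half of the hypotenuse, which is exactly what $|DC| \le \frac{1}{2}|AC|$ guarantees, and on $B$ being off the line $AC$, so that the ray $BD$ genuinely lies within the angular region bounded by $BM$ and $BC$. Everything else follows directly from the circumcenter property and the exterior-angle identity, so no substantial computation is needed.
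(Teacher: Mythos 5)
Your proof is correct. The paper states this result as a bare Observation and gives no proof of it at all, so there is no argument of the authors' to compare against; what you have written supplies the missing justification, and it is essentially the canonical one. The circumcenter fact (the midpoint $M$ of the hypotenuse satisfies $|MA|=|MB|=|MC|$) makes triangle $MBC$ isosceles, giving $\angle MBC=\gamma$; the exterior-angle identity gives $\angle BDA=\angle DBC+\gamma$; and the hypothesis $|DC|\le \frac{1}{2}|AC|=|MC|$ places $D$ on the segment $MC$, so the ray $BD$ lies in the closed wedge $\angle MBC$ and $\angle DBC\le \gamma$. Your equality case $D=M$ is exactly the inscribed-angle theorem in disguise: $\angle BMA$ is the central angle over the arc $AB$, twice the inscribed angle $\gamma=\angle ACB$. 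The only loose end is the degenerate endpoint $D=C$, where triangle $BDC$ collapses and the exterior-angle identity does not literally apply; there $\angle BDA=\angle BCA=\gamma\le 2\gamma$ holds directly, so the claim survives. With that one-line remark added, your argument is complete, and the wedge-containment step you flagged as delicate is indeed adequately justified by $D$ lying on segment $MC$ with $B$ off the line $AC$.
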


 \begin{lemma}
  \label{right-triangle-2}
  Suppose $r_i$ and $r_j$ move to new positions $\hat r_i$ and $\hat r_j$  in at most one computation cycle. Let $\phi$ be the angle between $\mathcal{L}_{r_ir_j}$ and $\mathcal{L}_{\hat r_i \hat r_j}$ i.e., 
  $\phi=\angle{r_ic\hat r_i} $ where c is the intersection point between $\mathcal{L}_{r_ir_j}$ and $\mathcal{L}_{\hat r_i\hat r_j}$. Then,
  \begin{center}
   $\phi < $ 2 Max $\{\theta(r_i),\theta(r_j)\}$
  \end{center}
 \end{lemma}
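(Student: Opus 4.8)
The plan is to bound $\phi$ by comparing the perpendicular offsets that the two single steps create off the old line $\mathcal{L}_{r_ir_j}$, and then to convert this metric estimate into the claimed angular estimate. Take $\mathcal{L}_{r_ir_j}$ as the baseline. The new endpoints satisfy $|\overline{r_i\hat r_i}|=\Delta(r_i)$ and $|\overline{r_j\hat r_j}|=\Delta(r_j)$, so $\hat r_i$ and $\hat r_j$ lie inside the discs $C(r_i)$ and $C(r_j)$ about the old positions. Any line meeting both discs makes an angle with the baseline no larger than the internal common tangent of the two discs, whence
\[
\sin\phi \;\le\; \frac{\Delta(r_i)+\Delta(r_j)}{|\overline{r_ir_j}|}.
\]
Using the identity $\Delta(r_i)/|\overline{r_ir_j}|=\sin(\max(DISP(r_ir_j)))$ from the proof of Lemma \ref{theta} together with $\max(DISP(r_ir_j))\le\theta(r_i)$ (and symmetrically for $r_j$), this gives $\sin\phi \le \sin\theta(r_i)+\sin\theta(r_j)\le 2\sin\big(\max\{\theta(r_i),\theta(r_j)\}\big)$.

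It remains to pass from this estimate on $\sin\phi$ to the bound $\phi < 2\max\{\theta(r_i),\theta(r_j)\}$ on the angle itself, and this is the step I expect to be the real obstacle: since $\sin$ is concave, a careless argument yields only $\phi\le\arcsin\!\big(2\sin\theta\big)$, which slightly overshoots $2\theta$. This is exactly what Observation \ref{right-triangle-1} is designed to control. I would realise $\phi$ as the apex angle $\angle BDA$ of a right triangle whose far vertex $C$ carries a displacement angle $\max(DISP)\le\theta$, whose right angle sits at the foot $B$ of the perpendicular dropped from a robot to the baseline, and whose interior point $D$ is the crossing point $c$. The hypothesis $|DC|\le\tfrac12|AC|$ must then be checked, and here the extreme smallness of a step is decisive: $\Delta(r_i)\le D(r_i)\sin\theta(r_i)$ with $\theta(r_i)=\beta(r_i)/n^{2}\le \pi/(3n^{2})$ by Lemma \ref{beta}, so the displaced point lies deep inside the half-length threshold. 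Observation \ref{right-triangle-1} then upgrades the sine estimate to $\phi\le 2\max\{\theta(r_i),\theta(r_j)\}$, and the doubling in the observation is precisely the factor $2$ appearing in the statement.

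Finally I would establish strictness. The internal-tangent bound is attained only in a degenerate limit that the algorithm never reaches (both steps maximal and oppositely oriented), and the inequality of Observation \ref{right-triangle-1} is itself strict once the step is nonzero, so $\phi < 2\max\{\theta(r_i),\theta(r_j)\}$. The hypothesis that the moves occur in at most one computation cycle is used only to guarantee that the relevant displacements are the single-step quantities $\Delta(r_i),\Delta(r_j)$, so that Lemma \ref{theta} and Observation \ref{Max-DISP} apply at both apexes. As a sanity check I would also run the two-turn decomposition $\phi\le DISP(r_ir_j)+\angle r_j\hat r_i\hat r_j$ through the intermediate line $\mathcal{L}_{\hat r_i r_j}$, which exhibits the contribution of each robot's motion separately and confirms that only the two quantities $\theta(r_i)$ and $\theta(r_j)$ govern the bound.
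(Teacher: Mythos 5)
Your overall strategy coincides with the paper's: reduce to the extremal configuration in which $\mathcal{L}_{\hat r_i\hat r_j}$ is a common tangent of $C(r_i)$ and $C(r_j)$ (this is how the paper uses Observation \ref{Max-DISP} in its Case 1.1/1.2), note that a bound on $\sin\phi$ alone cannot yield $\phi<2\theta$ because $\arcsin(2\sin\theta)>2\theta$, and repair this with Observation \ref{right-triangle-1}. Your diagnosis of the concavity obstacle is exactly right and is the real content of the lemma. The gap is in how you discharge the hypothesis $|DC|\le\frac12|AC|$. You argue it from the smallness of a step ($\Delta(r_i)\le D(r_i)\sin\theta(r_i)$ with $\theta(r_i)\le\pi/(3n^2)$), but the half-length condition has nothing to do with step size: the point $D$ must be the crossing point $c$, which lies on $\overline{r_ir_j}$, and its position is governed by the \emph{ratio} of the two steps --- the internal common tangent crosses $\overline{r_ir_j}$ at distance $|\overline{r_ir_j}|\,\Delta(r_j)/(\Delta(r_i)+\Delta(r_j))$ from $r_j$. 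The paper secures $|\overline{cr_j}|\le\frac12|\overline{r_ir_j}|$ by assuming w.l.o.g.\ $\Delta(r_i)\ge\Delta(r_j)$, and this choice is precisely what forces the conclusion to be stated with the $\theta$ of the robot taking the larger step, whence the maximum in the bound. If your smallness argument were valid it would apply with either robot in the role of the far vertex and would ``prove'' $\phi<2\min\{\theta(r_i),\theta(r_j)\}$, which is false: with $\Delta(r_j)\ll\Delta(r_i)$ the angle $\phi$ can approach $\arcsin\bigl(\Delta(r_i)/|\overline{r_ir_j}|\bigr)$, far exceeding $2\theta(r_j)$.

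Your triangle mapping is also inconsistent as written: you put the right angle at the foot $B$ of a perpendicular dropped from a robot to the baseline and take $D=c$ on the side $\overline{AC}$ with $C$ a ``far vertex'' on the baseline; but then $B$, $D$ and $C$ all lie on $\mathcal{L}_{r_ir_j}$ while $A$ does not, so $c$ cannot lie on $\overline{AC}$ (the new line meets the baseline only at $c$ itself). The paper's mapping is $B=\hat r_i$ (the tangency point, so the right angle comes from the radius being perpendicular to the tangent), $A=r_i$, $C=r_j$, $D=c$, giving $\angle ACB=DISP(r_ir_j)$ and hence $\phi\le 2\,DISP(r_ir_j)\le 2\theta(r_i)$ by Lemma \ref{theta}. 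Finally, your strictness argument is misplaced: equality in Observation \ref{right-triangle-1} occurs exactly when $D$ is the midpoint of $\overline{AC}$, i.e.\ when $\Delta(r_i)=\Delta(r_j)$, regardless of how small the steps are, so strictness does not come from the step being nonzero; the paper instead extracts it from the strict comparison $DISP(r_ir_j)<\mathrm{Max}\{DISP(r_ir_j)\}$ in its chain of inequalities. The correct skeleton, then, is: w.l.o.g.\ $\Delta(r_i)\ge\Delta(r_j)$, place $c$ in the half of $\overline{r_ir_j}$ nearer $r_j$, apply Observation \ref{right-triangle-1} with the vertices as above, and conclude with Lemma \ref{theta}; your async remark (that one computation cycle makes the single-step quantities the relevant ones, as in the paper's Case 2) is fine.
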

 \begin{proof} 
  \begin{itemize}
If any one $r_i$ and $r_j$ moves, then lemma is trivially true. Suppose both of them move once.
  \item \textbf{Case 1:}\\
 Suppose $r_i$ and $r_j$ move synchronously. Without loss of generality, let $\Delta(r_i) \ge \Delta(r_j)$. 
  \begin{itemize}
  \item \textbf{Case 1.1:}\\ Suppose $DIR(r_i)$ and $DIR(r_j)$ lie in the opposite sides of $\mathcal{L}_{r_ir_j}$ (Figure \ref{Lemma-3-1}). In view of observation $\ref{Max-DISP}$, Max$\{\phi\}$, the maximum
    value of $\phi$, is attained  when $\mathcal{L}_{\hat r_i\hat r_j}$ is a common tangent to $C(r_i)$ and $C(r_j)$. Let $M$ be the middle point of $\overline{r_ir_j}$.
    If $C(r_i)$ is strictly larger than $C(r_j)$, $c$ is closer to $r_j$ than $r_i$. If they are equal, $c$
    coincides with $M$. Consider the right-angled triangle $\triangle{r_i\hat r_ir_j}$. By observation $\ref{right-triangle-1}$,
    
        \begin{figure}[h]
    \centering
   \includegraphics[scale = .9]{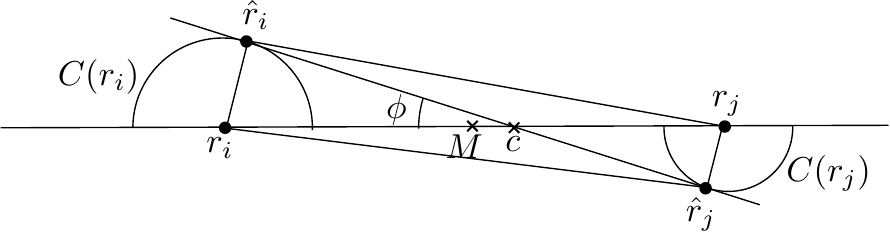}
   \caption{An example of case 1.1 for lemma 3}
   \label{Lemma-3-1}
  \end{figure}
    \begin{align}
     \phi & \le Max\{\phi\}\nonumber \\
     & \le 2 DISP(r_ir_j)\nonumber \\
     & < 2 Max \{DISP(r_ir_j)\} \nonumber \\
     &  \le  2 \theta(r_i) \nonumber
    \end{align}
    
 \item \textbf{Case 1.2:} \\If $DIR(r_i)$ and $DIR(r_j)$ lie in the same side of $\mathcal{L}_{r_ir_j}$ (Figure \ref{Lemma-3-2}), Max$\{\phi\}$ is attained when $\mathcal{L}_{\hat r_i\hat r_j}$ is a tangent to $C(r_i)$ from the point $c$ and $c$
    coincides with the closest point of $C(r_j)$ from $r_i$. Then following same argument as in case-1, we have the proof.\\
  
    \begin{figure}[h]
    \centering
   \includegraphics[scale = .9]{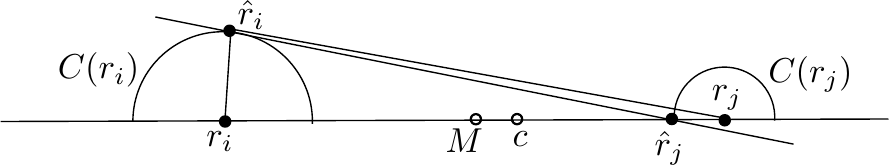}
   \caption{An example of case 1.2 for lemma 3}
   \label{Lemma-3-2}
  \end{figure}
  \end{itemize}
 \item \textbf{Case 2:}\\
 Suppose $r_i$ and $r_j$ move asynchronously. Suppose $r_i$ is moving and is at $r'_i$ when $r_j$ takes the snapshot of its  surroundings to compute
the value of $\Delta(r_j)$. Since $r_i$ has already computed the value of $\Delta(r_i)$ and computation of $\Delta$ values of $r_i$ and $r_j$ are independent, 
the proof follows from the same arguments as in case 1. In this case the value of $\Delta(r_j)$ may be different from the value in case 1.
\end{itemize}
    
  \end{proof}

 \begin{lemma}
 \label{deviation} Suppose two robots $r_i$ and $r_j$ move to $\hat r_i$ and $\hat r_j$ respectively in at most  one movement. Then\\ \\
  $Max\{DISP(r_i\hat r_j),DISP(r_j\hat r_i)\} < 2Max\{\theta(r_i),\theta(r_j)\}$.
 \end{lemma}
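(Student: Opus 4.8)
The plan is to bound the two quantities $DISP(r_i\hat r_j)=\angle r_i\hat r_j\hat r_i$ and $DISP(r_j\hat r_i)=\angle r_j\hat r_i\hat r_j$ separately; by the symmetry of the roles of $r_i$ and $r_j$ it suffices to treat the first and then take the maximum. If only one of the two robots actually moves, the claim is immediate from Lemma \ref{theta}: then $\hat r_j=r_j$, so the quantity is just $DISP(r_ir_j)\le\theta(r_i)\le Max\{\theta(r_i),\theta(r_j)\}<2\,Max\{\theta(r_i),\theta(r_j)\}$. So I assume both robots move, and (as in Lemma \ref{right-triangle-2}) that $\beta(r_i),\beta(r_j)\neq 0$ so that $\Delta(r_i),\Delta(r_j)>0$. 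The essential point is that $DISP(r_i\hat r_j)$ is the apparent angular shift of robot $i$ seen from the \emph{moved} observer $\hat r_j$, whereas Lemma \ref{theta} only controls the shift $DISP(r_ir_j)$ seen from the \emph{original} position $r_j$; the whole task is to show that displacing the observer can at most double this angle.

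First I would fix $\hat r_j$ and let $\hat r_i$ range over $C(r_i)$. Exactly as in Observation \ref{Max-DISP}, the angle $\angle r_i\hat r_j\hat r_i$ between the ray from $\hat r_j$ to the centre $r_i$ and the ray to the boundary point $\hat r_i$ is largest when $\overline{\hat r_j\hat r_i}$ is tangent to $C(r_i)$, so that
\[
\sin\bigl(DISP(r_i\hat r_j)\bigr)\le\frac{\Delta(r_i)}{|\overline{\hat r_j\,r_i}|}.
\]
Since $\hat r_j$ lies within distance $\Delta(r_j)$ of $r_j$, we have $|\overline{\hat r_j\,r_i}|\ge|\overline{r_ir_j}|-\Delta(r_j)$, so the observer's own motion enters only through shrinking this distance. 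Applying Observation \ref{Max-DISP} once more to each robot (as in the proof of Lemma \ref{theta}) gives $\Delta(r_i)/|\overline{r_ir_j}|\le\sin\theta(r_i)$ and $\Delta(r_j)/|\overline{r_ir_j}|\le\sin\theta(r_j)$, whence
\[
\sin\bigl(DISP(r_i\hat r_j)\bigr)\le\frac{\Delta(r_i)}{|\overline{r_ir_j}|\bigl(1-\sin\theta(r_j)\bigr)}\le\frac{\sin\theta(r_i)}{1-\sin\theta(r_j)}\le\frac{\sin\theta_\ast}{1-\sin\theta_\ast},
\]
where $\theta_\ast=Max\{\theta(r_i),\theta(r_j)\}$.

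It then remains to check that $\sin\theta_\ast/(1-\sin\theta_\ast)<\sin 2\theta_\ast$, equivalently $1<2\cos\theta_\ast(1-\sin\theta_\ast)$. This is where the smallness of $\theta$ does the work: by Lemma \ref{beta}, $\theta_\ast=\beta/n^2\le\pi/(3n^2)\le\pi/12$ for $n\ge2$, which makes the right-hand side comfortably larger than $1$ and also keeps $2\theta_\ast<\pi/2$. Since $\sin$ is increasing on $[0,\pi/2)$ and both $DISP(r_i\hat r_j)$ and $2\theta_\ast$ lie in this range, I conclude $DISP(r_i\hat r_j)<2\theta_\ast\le 2\,Max\{\theta(r_i),\theta(r_j)\}$; the symmetric argument bounds $DISP(r_j\hat r_i)$, and taking the maximum proves the lemma. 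Alternatively, the ``at most doubling'' step can be packaged through Observation \ref{right-triangle-1}, reading off $DISP(r_i\hat r_j)\le 2\,DISP(r_ir_j)$ with $\hat r_j$ in the role of the interior point $D$.

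The main obstacle I anticipate is precisely the verification that displacing the observer shrinks $|\overline{\hat r_j\,r_i}|$ by only a small fraction, i.e. controlling the denominator $1-\sin\theta(r_j)$: the worst case is when $DIR(r_j)$ points straight toward $r_i$, and one must, as in Lemma \ref{right-triangle-2}, separate the configurations according to whether $DIR(r_j)$ points toward or away from $r_i$. This is a routine estimate given the definition of $\Delta$ and the tininess of $\theta$, and it closes with large margin; the factor $2$ in the statement is generous and is really an artifact of the doubling bound rather than a tight constant.
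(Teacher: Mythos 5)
Your proof is correct, but it takes a genuinely different route from the paper's. The paper disposes of this lemma in a single line, citing Observation~\ref{right-triangle-1} and Lemma~\ref{right-triangle-2} together with a figure: the idea is to let the displaced observer play the role of the point $D$ in the right-triangle observation, so the angular displacement seen from $\hat r_j$ is at most double the angle already controlled by Lemmas~\ref{theta} and~\ref{right-triangle-2} --- exactly the ``alternative packaging'' you mention in passing at the end. Your main argument bypasses both results and is a direct trigonometric estimate: the tangent-cone bound $\sin\bigl(DISP(r_i\hat r_j)\bigr)\le\Delta(r_i)/|\overline{\hat r_j r_i}|$, the triangle inequality $|\overline{\hat r_j r_i}|\ge|\overline{r_ir_j}|-\Delta(r_j)$, the definitional bounds $\Delta(r_i)\le D(r_i)\sin\theta(r_i)\le|\overline{r_ir_j}|\sin\theta(r_i)$ (and likewise for $r_j$), and finally $\sin\theta_*/(1-\sin\theta_*)<\sin 2\theta_*$, which holds because $\theta_*\le\pi/(3n^2)\le\pi/12$ (Lemma~\ref{beta}) makes $2\cos\theta_*(1-\sin\theta_*)>1$ with room to spare. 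This is sound and in some ways stronger than the paper's treatment: it is self-contained (it does not inherit the case analysis of Lemma~\ref{right-triangle-2}, nor its sketchy asynchronous Case~2), it is timing-independent (only the final positions inside the two discs matter, which also covers non-rigid stops short of $\hat r_i$, $\hat r_j$), and it in fact yields the sharper bound $\arcsin\bigl(\sin\theta_*/(1-\sin\theta_*)\bigr)$, confirming your remark that the factor $2$ is slack. Three small points to tighten: (i) your first display presupposes that $\hat r_j$ lies strictly outside $C(r_i)$; state this first --- it follows from your own estimates, since $|\overline{\hat r_j r_i}|\ge|\overline{r_ir_j}|(1-\sin\theta_*)>|\overline{r_ir_j}|\sin\theta_*\ge\Delta(r_i)$ because $\sin\theta_*<\tfrac{1}{2}$; (ii) in the asynchronous interleaving where $r_j$ takes its snapshot after $r_i$ has moved, $D(r_j)$ is measured to $r_i$'s displaced position, so the clean bound $\Delta(r_j)\le|\overline{r_ir_j}|\sin\theta(r_j)$ degrades to $\Delta(r_j)\le|\overline{r_ir_j}|(1+\sin\theta_*)\sin\theta_*$; your margin absorbs this, since $2\cos\theta_*(1-\sin\theta_*-\sin^2\theta_*)>1.3$ at $\theta_*=\pi/12$, but it deserves a sentence rather than the brief gesture you make toward the direction of $DIR(r_j)$; (iii) your exclusion of the degenerate case $\beta=0$ (where $\theta=0$ and the stated strict inequality cannot hold) is necessary, and the paper is equally silent about it, so this is not a defect relative to the original.
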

     \begin{figure}[h]
    \centering
   \includegraphics[scale = 1]{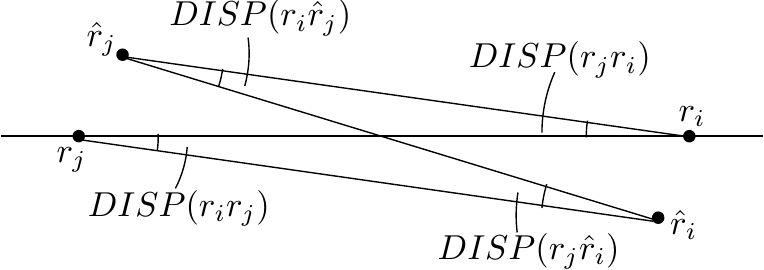}
   \caption{An example for lemma 4}
   \label{Lemma-4}
  \end{figure}
 \begin{proof}
  Follows from observation $\ref{right-triangle-1}$ and lemma $\ref{right-triangle-2}$ (Figure \ref{Lemma-4}).
 \end{proof}

\begin{lemma}
\label{lemma-5}
If $r_i,r_j$ and $r_k$ are not collinear and mutually visible to each other, then  during the whole execution of the above algorithm,  they never become collinear.
 \end{lemma}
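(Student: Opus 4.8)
The plan is to show that non-collinearity of the triple is preserved across the whole run by reducing it to a single-cycle claim and then iterating. Since three points $r_i,r_j,r_k$ are collinear exactly when one of them lies on the line joining the other two, it suffices to prove that throughout any one computation cycle the robot $r_i$ never reaches the line $\mathcal{L}_{r_jr_k}$ (and symmetrically for $r_j$ on $\mathcal{L}_{r_ir_k}$ and $r_k$ on $\mathcal{L}_{r_ir_j}$). Equivalently, I would track the three interior angles of $\triangle r_ir_jr_k$ and argue that none can reach the degenerate value $0$ or $\pi$ during a cycle.

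The core design fact I would exploit is that $ComputeDestination$ sets $\Delta(r_i)\le d(r_i)/n^2 < d(r_i)$, so $\hat r_i$ lies strictly nearer to $r_i$ than $intersect(r_i)$, the closest crossing of $Bisec(r_i)$ with any line $\mathcal{L}_{jk}$ through two robots visible to $r_i$. Because $r_j,r_k\in\mathcal{V}(r_i)$, the line $\mathcal{L}_{r_jr_k}$ is one of these; hence if $r_j$ and $r_k$ held still, the non-rigid move of $r_i$ along $Bisec(r_i)$ (which stops at or before $\hat r_i$) could not even touch $\mathcal{L}_{r_jr_k}$. This settles the static case with a strict $1/n^2$ safety margin.

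Next I would account for the fact that $r_j$ and $r_k$ also move, so $\mathcal{L}_{r_jr_k}$ itself rotates and translates. Here Lemmas \ref{right-triangle-2} and \ref{deviation} bound the relevant deviation: each edge direction at $r_i$ rotates by less than $2\max\{\theta(r_i),\theta(r_j)\}$ (resp. $2\max\{\theta(r_i),\theta(r_k)\}$), so each interior angle of the triangle changes in one cycle by less than $4\max\{\theta(r_i),\theta(r_j),\theta(r_k)\}$, which by Lemma \ref{beta} (giving $\theta\le\pi/(3n^2)$) is at most $4\pi/(3n^2)$. Combining this bounded rotation of $\mathcal{L}_{r_jr_k}$ with the strict margin of the previous paragraph, I would conclude that $r_i$ stays strictly on its original side of the moving line for the whole cycle, so the triple does not pass through a collinear configuration. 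The asynchronous subcase is treated as in Lemma \ref{right-triangle-2}: the $\Delta$-values are computed independently from snapshots taken at possibly different instants, so the same displacement bounds still apply.

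Finally I would promote the single-cycle claim to the whole execution by induction over cycles. After one cycle the triple is still non-collinear (by the above) and still mutually visible, because destinations are chosen so that no robot obstructs a pair it was not already obstructing; thus the hypotheses of the single-cycle claim are re-established and the argument repeats. Since any eventual collinearity would require some single cycle to carry the triple through a degenerate configuration, which none does, they never become collinear. The main obstacle is the asynchronous, simultaneous-motion step: the $intersect$ margin controls $r_i$ only relative to the line frozen at $r_i$'s own Look instant, whereas $r_j$ and $r_k$ move against a line they see at other instants, so one must verify that the $1/n^2$ slack in every $\Delta$ strictly dominates the accumulated angular deviation bounded by Lemmas \ref{right-triangle-2}--\ref{deviation}; making this quantitative comparison airtight is the delicate part.
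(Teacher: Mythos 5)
Your static case and your use of Lemmas \ref{right-triangle-2} and \ref{deviation} for simultaneous motion track the paper's Cases 1 and 3.1, but there is a genuine gap at exactly the point you flag and then defer. First, in the simultaneous-motion case, ``bounded rotation plus the $d(r_i)/n^2$ margin'' does not by itself yield the conclusion: the margin $d(r_i)/n^2$ is a distance along $Bisec(r_i)$ to line positions frozen at $r_i$'s Look instant, while the line $\mathcal{L}_{r_jr_k}$ rotates toward $r_i$ during the cycle, so one must actually compare the angle needed for collinearity against the achievable displacement. The paper does this quantitatively: in Case 3.1.1 it shows the residual angle $\psi=\angle r_ir_jr_k-\phi$ satisfies $\psi>\frac{n-2}{n^2}\angle r_ir_jr_k$, whereas collinearity would force $DISP(r_kB)=\psi$, and Lemma \ref{deviation} caps $DISP(r_kB)<\frac{2}{n^2}\angle r_ir_jr_k$; the parallel case 3.1.2 needs a separate distance estimate $l-\Delta(r_k)>|\overline{r_ir_j}|\sin(\angle r_ir_jr_k/n^2)$. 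Your proposal asserts the conclusion of this comparison without performing it, and you yourself concede that making it airtight is ``the delicate part'' --- which is precisely the content of the lemma.

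Second, and more seriously, your reduction to a single-cycle claim plus induction is invalid under asynchrony, and treating the asynchronous subcase ``as in Lemma \ref{right-triangle-2}'' does not repair it. Lemma \ref{right-triangle-2} covers at most one move of each robot, but in the asynchronous model a robot $r_k$ may hold a stale $\Delta(r_k)$ (computed at its Look instant $t_k$) while $r_i$ and $r_j$ each execute up to $\frac{n-1}{2}$ complete cycles before $r_k$ moves at $t'_k$; the line $\mathcal{L}_{r_ir_j}$ can then incline toward $r_k$ cumulatively by far more than any single-cycle bound. The paper's Case 3.2 handles this with two ingredients absent from your proposal: a bound of $\frac{n-1}{2}$ on the total number of moves of any robot (justified, within the same induction, by the fact that no new collinearities are ever created), and the estimate that after $d\le\frac{n-1}{2}$ moves the residual angle still satisfies $\psi>(1-\frac{1}{n^2})^d\angle r_sr_lr_m>\frac{1}{n^2}\angle r_sr_lr_m$, which strictly dominates the stale $\Delta(r_m)\le\frac{1}{n^2}\angle r_sr_lr_m$, together with the analogous cumulative distance inequality for the parallel scenario. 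Without this multi-move accumulation argument, your induction hypothesis is never re-established across a stale window, so the proof does not go through. (A minor additional caution: your appeal to preserved mutual visibility cites what is effectively Lemma \ref{lemma7}, which the paper proves after, and partly by means of, the present lemma, so you should be careful not to introduce circularity there.)
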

 \begin{proof} 
 
 We have the following cases,
   \begin{itemize}
  \item \textbf{Case 1 (Only one robot moves):}\\ Without loss of generality, suppose  $r_j,r_k$ stand still and $r_i$ moves. If $DIR(r_i)$ does not intersect $\mathcal{L}_{r_jr_k}$ (Figure \ref{case--1}(a)), then the claim is trivially true.
  
  Suppose $DIR(r_i)$ intersects  $\mathcal{L}_{r_jr_k}$ (Figure \ref{case--1}(b)). Since distance traversed by $r_i$ is bounded above by $\frac{d(r_i)}{n^2}$, $r_i$ can not reach   $\mathcal{L}_{r_jr_k}$ and $r_i,r_j$ and $r_k$ will not become collinear.\\
   \begin{figure}[h]
    \centering
   \includegraphics[scale = .80]{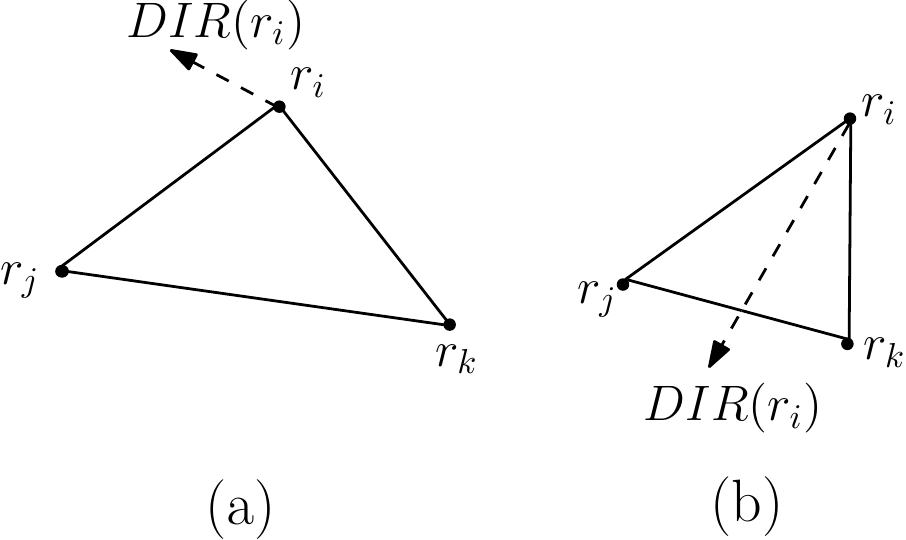}
   \caption{ An example of case 1 for lemma 5}
   \label{case--1}
 \end{figure}
 
 \item \textbf{Case 2 (Two of the robots move):}\\
 Without loss of generality, suppose $r_i$ and $r_j$ move while $r_k$ remains stationary. This case would be feasible only if $n \ge 4$. 
 
 \begin{itemize}
  \item \textbf{Case 2.1:}\\
   Suppose $r_i$ and $r_j$ move synchronously. Then by lemma $\ref{theta}$,
 
 \begin{equation}
 \label{r_i}
  DISP(r_ir_k) \le \frac{\angle{r_ir_kr_j}}{n^2}
 \end{equation}
And 
 \begin{equation}
 \label{r_j}
  DISP(r_jr_k) \le \frac{\angle{r_ir_kr_j}}{n^2}
 \end{equation}
  From equation $\ref{r_i}$ and $\ref{r_j}$
  \begin{equation}
  \label{r_ir_j}
   DISP(r_ir_k)+DISP(r_jr_k) <  \angle{r_ir_kr_j}    
  \end{equation}
The minimum value of $DISP(r_ir_k)+DISP(r_jr_k)$ for which $r_i,r_j$ and $r_k$ could become collinear is $\angle{r_ir_kr_j}$ . In view of  equation $(\ref{r_ir_j})$, we conclude that $r_i,r_j$ and $r_k$ would never become collinear. 
\item \textbf{Case 2.2:}\\
Suppose that $r_i$ is in motion and is at $\hat{ r_i}'$ when $r_j$  computes the value of $\Delta(r_j)$. If $\hat {r_i}'$  and $r_j$ lie in opposite sides of $\mathcal{L}_{r_ir_k}$ 
(Figure \ref{case2.2}(a)), 
     \begin{figure}[h]
    \centering
   \includegraphics[scale = .6]{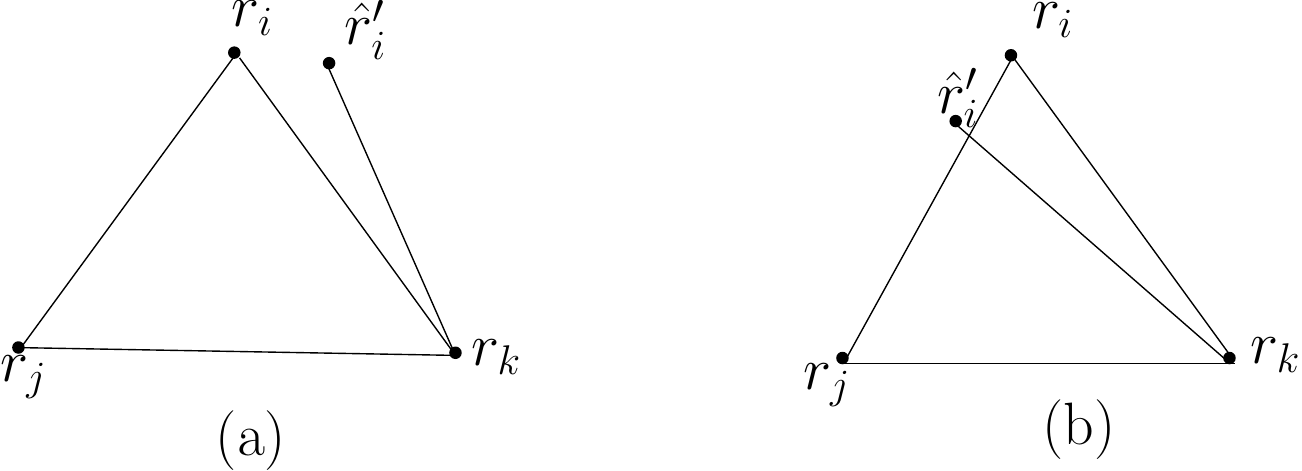}
   \caption{An example of case 2.2 for lemma 5}
   \label{case2.2}
  \end{figure}
then $$\Delta(r_j)\le \frac{1}{n^2}dist(r_j,\mathcal{L}_{r_k\hat {r_i}'})$$ which implies that $r_j$ can not reach $\mathcal{L}_{r_ir_k}$ when $r_i$ reaches its destination 
and hence the lemma. Suppose $\hat {r_i}'$  and $r_j$ lie in same side of $\mathcal{L}_{r_ir_k}$ (Figure \ref{case2.2}(b)). Then we have,
\begin{align}
 DISP(r_ir_k) & \le \frac{\angle{\hat{r_i}'r_kr_j}}{n^2}\nonumber\\
 & <  \frac{\angle{\hat{r_i}r_kr_j}}{n^2}\nonumber
\end{align}
 Lemma follows from the same arguments as used in Case $2.1$. \\
 Consider the case: suppose $r_j$ takes the snapshot at time $t$ and moves to its destination at time $t'$. In between times $t$ and $t'$, suppose $r_i$ has made at most 
 $\frac{n-1}{2}$ moves (we shall prove in case 3.2 that number of movements of any robot is bounded above by $\frac{n-1}{2}$). If $r_i$ moves towards $r_j$,  after $\frac{n-1}{2}$ moves,
 we would have $$ DISP(r_ir_j)< (1-\frac{1}{n^2})^{\frac{n-1}{2}}\angle{r_ir_kr_j} $$
 which is less than $(1-\frac{1}{n^2})\angle{r_ir_kr_j}$. Hence  equation  $(\ref{r_ir_j})$ is satisfied in this case and we have the proof of the lemma.   If $r_i$ moves away from $r_j$, then there is nothing to prove.
 \end{itemize}
 
   \item \textbf{Case 3 (All three robots move):}
  \begin{itemize}
\item \textbf{Case 3.1:}\\
Suppose $r_i,r_j$ and $r_k$ move  synchronously.
\item \textbf{Case 3.1.1:}\\
Suppose $\mathcal{L}_{\hat r_i\hat r_j}$ intersects $\mathcal{L}_{r_ir_j}$ at an angle $\phi>0$ (Figure \ref{case-3-1}). 

\begin{figure}[h]
   \centering
    \includegraphics[scale = .6]{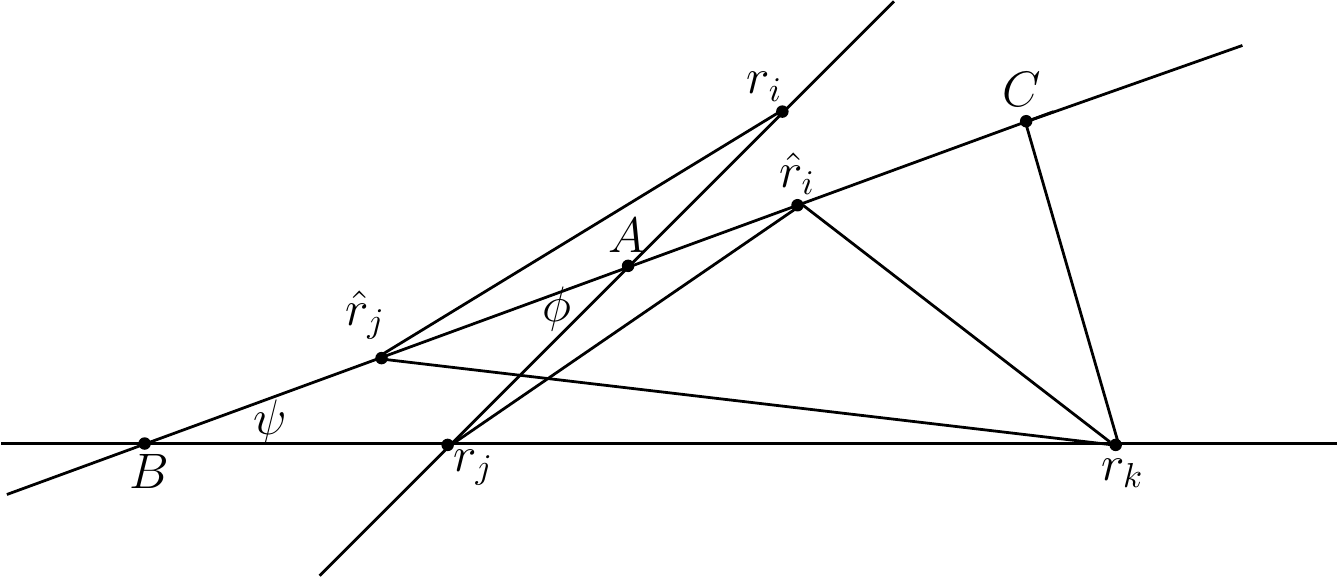}
    \caption{An example of case 3.1.1 for lemma 5}
    \label{case-3-1}
   \end{figure}
    
    By lemma $\ref{right-triangle-2}$,
    \begin{align}
    \label{key-3}
\phi & < 2 Max\{\theta(r_i),\theta(r_j)\} \nonumber\\
& \le \frac{2}{n^2}\angle {r_ir_jr_k}
\end{align}
In $\triangle{ABr_j}$, 
\begin{align}
\label{key-1}
\psi& =\angle{r_ir_jr_k}-\phi \nonumber\\ 
& > \angle{r_ir_jr_k}-\frac{2}{n^2}\angle{r_ir_jr_k}\nonumber\\
& = \frac{n-2}{n^2}\angle{r_ir_jr_k}\nonumber \\
& \ge \frac{3}{5^2}\angle{r_ir_jr_k}
\end{align}

Now $r_i$, $r_j$ and $r_k$ would be collinear only if
\begin{align}
\label{key-2}
DISP(r_kB) &=\psi   
\end{align}

From lemma  $\ref{deviation}$, 
\begin{align}
\label{key-3}
 DISP(r_kB) & < DISP(r_k\hat r_j)\nonumber\\
 &< 2 Max \{\theta(r_i),\theta(r_j)\} \nonumber \\
 & \le \frac{2}{5^2}\angle{r_ir_jr_k}
\end{align}

Equations $\ref{key-1}$, $\ref{key-2}$ and $\ref{key-3}$ imply that $r_i$, $r_j$ and $r_k$ do not become collinear. 
\item \textbf{Case 3.1.2:}\\
Suppose $\mathcal{L}_{\hat r_i\hat r_j}$ and $\mathcal{L}_{r_ir_j}$ are parallel i.e.,  $\phi=0$ which implies that $\psi=\angle{r_ir_jr_k}$ (Figure \ref{Lemma-5-312}). Let $Bisec(r_k)$ intersect
 $\mathcal{L}_{r_ir_j}$ at $P$ and $|\overline{r_kP}|=l$. Since $\Delta(r_k) \le |\overline{r_ir_j}|sin(\frac{\angle{r_ir_jr_k}}{n^2})$ and $n\ge 5$,
     \begin{figure}[h]
    \centering
   \includegraphics[scale = 1]{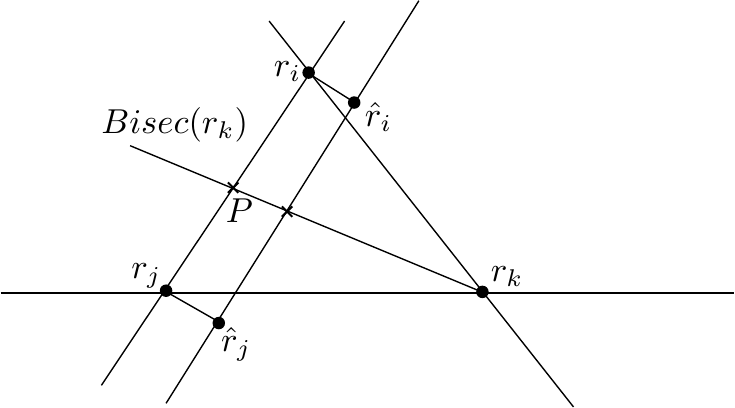}
   \caption{An example of case 3.1.2 for lemma 5}
   \label{Lemma-5-312}
  \end{figure}
\begin{align}
\label{eq-11}
 l-\Delta(r_k) & \ge |\overline{r_ir_j}|sin(\angle{r_ir_jr_k})- \Delta(r_k)\nonumber\\
 & \ge |\overline{r_ir_j}|sin(\angle{r_ir_jr_k})-|\overline{r_ir_j}|sin(\frac{\angle {r_ir_jr_k}}{n^2})\nonumber \\
 & \ge |\overline{r_ir_j}|(sin(\angle r_ir_jr_k)-sin(\frac{\angle {r_ir_jr_k}}{5^2}))\nonumber\\
 & > |\overline{r_ir_j}|sin(\frac{\angle {r_ir_jr_k}}{5^2}) 
\end{align}

$\Delta(r_i)$ and $\Delta(r_j)$ are bounded above by $|\overline{r_ir_j}|sin$ $(\frac{\angle {r_ir_jr_k}}{5^2})$. Hence by equation $(\ref{eq-11})$, 
$r_i$ and $r_j$ and $r_k$ do not become collinear.

 \end{itemize}

\item \textbf{Case 3.2:}

Suppose $r_i$, $r_j$ and $r_k$ move  asynchronously. The main problem in this case is the following scenario:  
suppose $r_j$ or $r_k$  takes the snapshot at time $t_j$ or $t_k$ respectively and starts  moving to its computed destination at time $t'_j$ or $t'_k$ respectively. Suppose the configuration has
been changed in between the times due to the movements of the other robots. Then the corresponding $\Delta$ value of $r_j$ or $r_k$  is not consistent w.r.t. the current configuration. We have to show that this would not 
create any problem for our algorithm.
The main idea of proof in this case is that we have to estimate the maximum amount of inclination of $\mathcal{L}_{r_ir_j}$ towards $r_k$ between the times $r_j$ or $r_k$ takes
the snapshot of surroundings and it
reaches the destination. So, in the following proofs we only consider the scenarios (as in the case 3.1.1. and case 3.1.2) in which there are possibilities of maximum reduction in the $\angle{r_ir_jr_k}$, which depicts the inclination 
of $\mathcal{L}_{r_ir_j}$ towards $r_k$. Note that the inclination of $\mathcal{L}_{r_ir_j}$ towards $r_k$ is maximum when both $r_i$ and $r_j$ move synchronously. So, we only prove the case when $r_k$ holds the old
value of $\Delta$. 

\item \textbf{Case 3.2.1}\\
Suppose $r_k$ holds the old value of $\Delta$ w.r.t. to the current configuration.
Suppose $r_i$ and $r_j$ are at 
$r^0_i$ and  $r^0_j$ respectively when $r_k$ takes the snapshot at time $t_k$. Suppose till $t'_k$, $r_i$ and $r_j$ move $x$ and $x'$ times respectively. Note that initially $r_i$ and $r_j$ can
be collinear with $n-1$ robots and to remove these collinearity they  have to move at most $\frac{n-1}{2}$ times if they do not create any new collinearity (this bound is obtained by 
considering the degenerate case i.e., when all the robots are collinear initially).\\
First we prove that
$x$ and $x'$ are bounded above by $\frac{n-1}{2}$. To prove this we show that $r_i$ and $r_j$ do not create any new collinearity while moving. We prove this for arbitrary robots. 
Suppose some robot $r_s$, while moving, creates a new collinearity with $r_l$ and $r_m$ for the first time during the 
execution of our algorithm (Figure \ref{case 3.2}). 
  \begin{figure}[h]
    \centering
   \includegraphics[scale = .8]{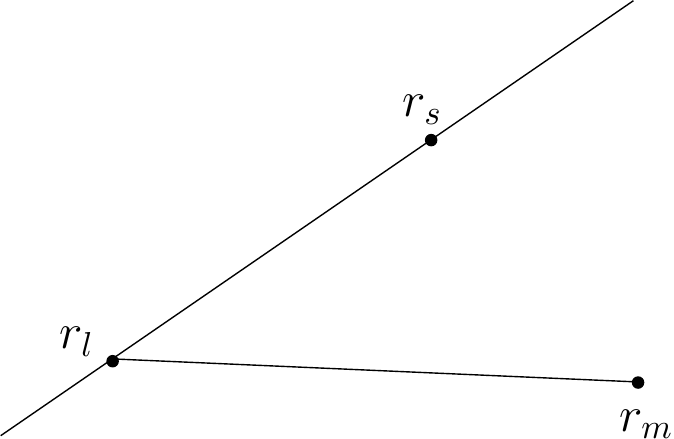}
   \caption{An example of case 3.2.1 for lemma 5}
   \label{case 3.2}
  \end{figure}
Then either one of $r_l$ and $r_m$ or both have  $\Delta$ values w.r.t. old configurations. As stated earlier we only prove the case in which only one robot, say $r_m$, has old $\Delta$
value. $r_m$ computes $\Delta(r_m)$ at the time $t_m$ i.e., $$\Delta(r_m)\le \frac{1}{n^2}\angle{r_sr_lr_m}.$$ Suppose  $r_m$ does not move 
till time $t'_m$. The number of times $r_s$ and $r_l$ move to break the initial collinearities before time $t'_m$ is upper bounded by $\frac{n-1}{2}$. 
 $r_m$ would become
collinear with $r_s$ and $r_l$ when $\mathcal{L}_{r_sr_l}$ would be inclined enough towards $r_m$  so that by moving a $\Delta(r_m)$ amount it would reach this straight line. 
We try to estimate the inclination of $\mathcal{L}_{r_sr_l}$ towards $r_m$ (which is depicted by the angle $\psi$ as in 
the case 3.1.1. and by the displacement of $\mathcal{L}_{r_sr_l}$ towards $r_m$ as in the case 3.1.2.) after  $\frac{n-1}{2}$ number of movements of $r_s$ and $r_l$ (note that we have consider the over estimated value of the number of movements of $r_s$ and $r_l$). 
As computed in the case 3.1.1, after first movement,  $$\psi > (1- \frac{1}{n^2}) \angle{r_sr_lr_m}$$ and $\angle{r_sr_lr_m}$ will become at most $(1+\frac{1}{n^2})\angle{r_sr_lr_m}$. 
By the same repeated arguments, we can say that after $d$ movements $$\psi>(1-\frac{1}{n^2})^d\angle{r_sr_lr_m}$$ which is strictly greater than $\frac{1}{n^2}\angle{r_sr_lr_m}$ for $d\le\frac{n-1}{2}$. 
 This contradicts the fact that $r_s$ creates collinearity with $r_l$ and $r_m$.  
 For the scenario same as the case 3.1.2., we have,\\
 \begin{align}
 \label{dis}
 |\overline{r_lr_m}|sin(\angle{r_sr_lr_m})-\frac{n-1}{2}|\overline{r_lr_m}|sin(\frac{\angle{r_sr_lr_m}}{n^2}) & > \nonumber \\ 
 |\overline{r_lr_m}|sin(\frac{\angle{r_sr_lr_m}}{n^2})
 \end{align}
 
This also  contradicts the fact that $r_s$ creates collinearity with $r_l$ and $r_m$. 
Hence, we conclude that $r_s$ would not become collinear with $r_l$ and $r_m$.\\
In the above proof, we replace $r_s$, $r_l$ and $r_m$ by $r_i$, $r_j$ and $r_k$ respectively to conclude that $r_i$ would not become collinear with $r_j$ and $r_k$ during the whole execution of our algorithm.

\end{itemize}
 \end{proof}
 
\begin{lemma}
\label{lemma6}
Consider any two robots $r_i$ and $r_j$. $r_i$ does not cross $Bisec(r_j)$. 
\end{lemma}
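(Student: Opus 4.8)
The plan is to translate ``$r_i$ does not cross $Bisec(r_j)$'' into a statement about the bearing of $r_i$ measured at $r_j$. Let $r_p,r_q\in\mathcal{V}(r_j)$ be the two consecutive vertices of $STR(\mathcal{V}(r_j))$ that form the angle $\alpha(r_j)=\angle r_pr_jr_q$, so that $Bisec(r_j)$ is the ray from $r_j$ into the interior of this sector, its bearing being exactly $\alpha(r_j)/2$ away from the bearings of $r_p$ and of $r_q$. Since $r_p,r_q$ are consecutive, the open sector contains no vertex of $STR(\mathcal{V}(r_j))$; hence for every $r_i\in\mathcal{V}(r_j)$ the difference between the bearing of $r_i$ seen from $r_j$ and that of $Bisec(r_j)$ is at least $\alpha(r_j)/2$. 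I will denote this bearing gap by $g(r_i)$.

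First I would reduce crossing to this gap. As $r_i$ slides along $Bisec(r_i)$ from $r_i$ to $\hat r_i$, the bearing of the moving point seen from the fixed external point $r_j$ changes monotonically, so the bearings of all points of $\overline{r_i\hat r_i}$ fill exactly the interval spanned by the bearings of $r_i$ and $\hat r_i$, whose width is $DISP(r_ir_j)\le\theta(r_i)$ by Lemma~\ref{theta}. The segment $\overline{r_i\hat r_i}$ can meet the ray $Bisec(r_j)$ only if this interval contains the bearing of $Bisec(r_j)$, i.e.\ only if $\theta(r_i)\ge g(r_i)$. Thus it suffices to establish $\theta(r_i)<g(r_i)$.

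The key step is to bound $g(r_i)$ below by $\beta(r_i)$. Walking angularly from $r_i$ towards $Bisec(r_j)$, the first gap encountered is the angle at $r_j$ between $r_i$ and its neighbour on $STR(\mathcal{V}(r_j))$ on that side; this angle is one of the elements $\angle r_{i-1}r_jr_i,\angle r_ir_jr_{i+1}$ of $\Gamma'(r_i)$, hence is at least $\beta(r_i)$, while it is clearly no larger than the whole gap $g(r_i)$. In the extremal case $r_i\in\{r_p,r_q\}$ the relevant element of $\Gamma'(r_i)$ is $\alpha(r_j)$ itself, so $\beta(r_i)\le\alpha(r_j)=2g(r_i)$. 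In all cases $\beta(r_i)\le 2g(r_i)$, and therefore
\begin{equation*}
\theta(r_i)=\frac{\beta(r_i)}{n^2}\le\frac{2\,g(r_i)}{n^2}<g(r_i),
\end{equation*}
the last inequality holding because $n\ge 2$; this gives the claim when $r_j$ is stationary and $r_i\in\mathcal{V}(r_j)$.

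The hard part will be the asynchronous schedule, for two reasons. First, $r_j$ may itself move while $r_i$ is in transit, so $Bisec(r_j)$ is not truly a fixed ray; I would control the extra rotation of $Bisec(r_j)$ over the at most $\tfrac{n-1}{2}$ moves each robot makes exactly as in the asynchronous cases of Lemma~\ref{lemma-5}, and absorb it into the slack between $\theta(r_i)$ and $g(r_i)$. Second, if $r_i\notin\mathcal{V}(r_j)$ the membership of the adjacent gap in $\Gamma'(r_i)$ is unavailable, since $r_j$ then contributes nothing to $\Gamma'(r_i)$; here $r_i$ shares its bearing with the robot occluding it, a genuine vertex of $STR(\mathcal{V}(r_j))$ at bearing gap at least $\alpha(r_j)/2$, so I would instead invoke the uniform bound $\theta(r_i)\le\pi/(3n^2)$ from Lemma~\ref{beta} together with a lower bound on $\alpha(r_j)$ coming from the at most $n-1$ gaps around $r_j$. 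Throughout, Observation~\ref{Max-DISP} certifies that the extremal bearing change is attained at a tangential configuration, just as in the displacement estimates above.
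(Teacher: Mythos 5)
Your reduction of ``crossing'' to a bearing interval at $r_j$, together with the bound $\beta(r_i)\le 2g(r_i)$ extracted from $\Gamma'(r_i)$, is a genuinely different route from the paper's. The paper argues metrically: it locates the intersection point $p$ of $Bisec(r_i)$ and $Bisec(r_j)$, disposes of the easy branches via $\Delta(r_i)\le d(r_i)/n^2$ (the $intersect(r_i)$ bound), and in the remaining branch treats $\alpha(r_i)$ and $\alpha(r_j)$ as two angles of a common triangle $\triangle{r_ir_jr_k}$, comparing $|\overline{ap}|$ with $|\overline{pr_i}|$ by the sine rule. For a mutually visible pair in a single snapshot your angular argument is sound and arguably cleaner: $DISP(r_ir_j)\le\theta(r_i)$ (Lemma~\ref{theta}) applies to all $r_j$, the bearing of a point sliding along $\overline{r_i\hat r_i}$ does vary monotonically as seen from $r_j$, and your chain $\theta(r_i)\le 2g(r_i)/n^2<g(r_i)$ is valid; it also sidesteps the paper's unexplained step that failure of the $intersect$ branch forces the two bisected angles into one triangle.

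However, your handling of $r_i\notin\mathcal{V}(r_j)$ has a genuine gap: the ``lower bound on $\alpha(r_j)$ coming from the at most $n-1$ gaps around $r_j$'' does not exist. A bound of the form $\alpha(r_j)\ge 2\pi/(n-1)$ holds only when $\alpha(r_j)$ is the true maximum of $\Gamma(r_j)$; by the paper's definition, when the maximum gap is at least $\pi$, $\alpha(r_j)$ is the \emph{second} maximum, which can be arbitrarily small. Concretely, let $\mathcal{V}(r_j)=\{r_p,r_q\}$ with $\angle{r_pr_jr_q}=\epsilon$ and all remaining robots hidden behind $r_p$ on the ray from $r_j$ through $r_p$: then $\Gamma(r_j)=\{\epsilon,\,2\pi-\epsilon\}$, so $\alpha(r_j)=\epsilon$, while your uniform estimate $\theta(r_i)\le\pi/(3n^2)$ from Lemma~\ref{beta} is a fixed positive quantity, and the required comparison $\theta(r_i)<g(r_i)$ with $g(r_i)\ge\alpha(r_j)/2$ fails for small $\epsilon$. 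Worse, this configuration is exactly the one your second case must cover, since hidden robots exist precisely when some direction from $r_j$ is multiply occupied; so the case cannot be waved away, and you need a tool tied to $r_i$'s own data --- e.g.\ the paper's $\Delta(r_i)\le d(r_i)/n^2$ with $d(r_i)$ the distance to $intersect(r_i)$ --- rather than any lower bound on $\alpha(r_j)$. (Your deferral of the asynchronous rotation of $Bisec(r_j)$ to the machinery of Lemma~\ref{lemma-5} is only a sketch, but the paper's own proof is equally static on that point, so the occluded case is the decisive defect; you also leave untreated the degenerate branch $\beta(r_i)=0$, where the algorithm sets $\Delta(r_i)=D(r_i)$ and Lemma~\ref{theta} gives nothing.)
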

     \begin{figure}[h]
   \centering
    \includegraphics[scale = .7]{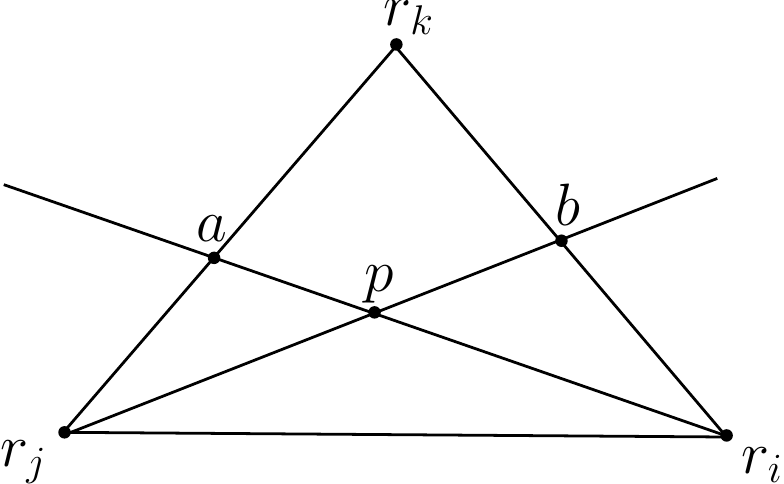}
    \caption{An example for lemma 6}
    \label{lemma-6}
   \end{figure}
\begin{proof}
 If $Bisec(r_i)$ and $Bisec(r_j)$ do not intersect, then there is nothing to prove.  Suppose $Bisec(r_i)$ and $Bisec(r_j)$  intersect at a point $p$ (Figure \ref{lemma-6}). If at least one of $intersect(r_i)$ and $intersect(r_j)$ is closer to $r_i$ and
 $r_j$ respectively than $p$, then we are done. Else $\alpha(r_i)$ and $\alpha(r_j)$ are angle of same triangle $\triangle{r_ir_jr_k}$ for some $r_k \in \cal R$ i.e, $\alpha(r_i)=\angle{r_kr_ir_j}$ and $\alpha(r_i)=\angle{r_kr_jr_i}$. 
 In $\triangle{r_ir_jr_k}$, let $Bisec(r_i)$ and $Bisec(r_j)$ intersect $\overline{r_jr_k}$ and $\overline{r_ir_k}$ at $a$ and $b$ respectively. Here $n > 5$.\\ \\
 In $\triangle{ar_jp}$,
 \begin{align}
  \label{key-6}
  |\overline{ap}| & = sin(\frac{\angle{r_kr_jr_i}}{2})\frac{|\overline{r_ja}|}{sin(\angle{apr_j})}
 \end{align}
 In $\triangle{pr_ir_j}$,
\begin{align}
 \label{key-7}
 |\overline{pr_i}| & = sin(\frac{\angle{r_kr_jr_i}}{2})\frac{|\overline{r_ir_j}|}{sin(\angle{\pi-apr_j})} \nonumber \\
 & =sin(\frac{\angle{r_kr_jr_i}}{2})\frac{|\overline{r_ir_j}|}{sin(\angle{apr_j})}
 \end{align}
From equation $\ref{key-6}$ and $\ref{key-7}$,
\begin{align}
 \frac{|\overline{ap}|}{|\overline{pr_i}|}= \frac{|\overline{r_ja}|}{ |\overline{r_ir_j}|}
\end{align}
Since $|\overline{r_ja}|< |\overline{r_ir_j}|$, $|\overline{ap}|<|\overline{pr_i}|$ which implies, 
\begin{center}
$\Delta(r_i)<\frac{|\overline{r_ia}|}{5^2}$\\
\hspace{.32in}< $|\overline{pr_i}|$.
\end{center}
Hence $r_i$ can not 
cross $Bisec(r_j)$. Similarly, $r_j$  can not cross $Bisec(r_i)$.
\end{proof}

 \begin{lemma}
 \label{lemma7}
  
 Suppose, for any robot $r_i \in \cal R$, $r_k \notin {\cal V}(r_i)$. 
 Then  during the whole execution of the algorithm $r_i$ will not block the vision between  $r_j$ and  $r_k$ where $r_j \in {\cal V}(r_k)$. 
 \end{lemma}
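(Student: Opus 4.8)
The plan is to turn the visibility statement into a statement about collinearity and then isolate the single situation that Lemma~\ref{lemma-5} does not already settle. At any instant $r_i$ blocks the line of sight between $r_j$ and $r_k$ exactly when $r_i$ lies on the open segment $\overline{r_jr_k}$, i.e.\ when $r_i,r_j,r_k$ become collinear with $r_i$ strictly between the other two. So it suffices to show that $r_i$ never enters $\overline{r_jr_k}$ while $r_j\in{\cal V}(r_k)$. If $r_i,r_j,r_k$ were mutually visible and non-collinear, Lemma~\ref{lemma-5} would already forbid them from ever becoming collinear, so nothing new would be needed. The real content of the hypothesis $r_k\notin{\cal V}(r_i)$ is precisely that this mutual visibility fails: since $r_k\notin{\cal V}(r_i)$, the angle $\angle r_ir_kr_j$ is \emph{not} an angle of $\Gamma'(r_i)$, and therefore the bound $\theta(r_i)\le\frac1{n^2}\angle r_ir_kr_j$ that drove Lemma~\ref{lemma-5} is unavailable. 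The proof must therefore be rebuilt around the robot that actually hides $r_i$ from $r_k$.

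My main plan is a contradiction argument that recovers Lemma~\ref{lemma-5} at the last moment. Let $r_m\in{\cal V}(r_k)$ be a robot on the ray $r_k\!\to\!r_i$ (it exists because $r_k$ cannot see $r_i$); note $\angle r_ir_kr_j=\angle r_mr_kr_j$, and since $r_m,r_j\in{\cal V}(r_k)$ this angle is at least the smallest gap seen at $r_k$, so $\angle r_ir_kr_j\ge\beta(r_k)=n^{2}\theta(r_k)$. Suppose, for contradiction, that $t_0$ is the first time $r_i$ reaches $\overline{r_jr_k}$. At times just before $t_0$ the segment $\overline{r_jr_k}$ is otherwise clear and $r_i$ sits arbitrarily close to it but off it; I would argue that in this regime $r_i$ lies in the open region flanking the clear segment and hence is visible to both $r_j$ and $r_k$, so that $r_i,r_j,r_k$ are mutually visible and non-collinear, whereupon Lemma~\ref{lemma-5} says they can never become collinear~--- contradicting the event at $t_0$. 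The degenerate case $r_j=r_m$ (where $r_i$ lies beyond $r_j$ on the ray and would have to overtake $r_j$ toward $r_k$) I would dispose of separately using Lemma~\ref{lemma6}: the destinations lie along the bisectors, $r_i$ cannot cross $Bisec(r_j)$, and the step lengths $\Delta(\cdot)$ keep the two robots apart.

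A more quantitative alternative, in the bookkeeping style of Lemmas~\ref{theta}--\ref{deviation}, is to show directly that the positive angle $\angle r_ir_kr_j$ cannot shrink to $0$. Using Observation~\ref{Max-DISP} with Lemmas~\ref{right-triangle-2} and~\ref{deviation}, each synchronous round rotates the rays $r_k\!\to\!r_i$ and $r_k\!\to\!r_j$ by at most $2\,\mathrm{Max}\{\theta(r_i),\theta(r_k)\}$ and $2\,\mathrm{Max}\{\theta(r_j),\theta(r_k)\}$; since $\theta(r_j),\theta(r_k)\le\frac1{n^{2}}\angle r_ir_kr_j$ (through $r_k\in{\cal V}(r_j)$ and the gap at $r_k$), and the number of movements of each robot is bounded by $\frac{n-1}{2}$, the geometric estimate $\psi>(1-\frac1{n^2})^{d}\angle r_ir_kr_j$ of Case~3.2 of Lemma~\ref{lemma-5} keeps the angle strictly positive. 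I expect the main obstacle to live in exactly this comparison: because $r_i$ and $r_k$ are \emph{not} mutually visible, $\theta(r_i)$ is governed by $\beta(r_i)$ and is not controlled by $\angle r_ir_kr_j$, so the drift contributed by $r_i$'s own motion cannot be charged against $\beta(r_k)$ in the naive way. Routing that contribution through the intermediate blocker $r_m$ (with $r_m\in{\cal V}(r_i)\cap{\cal V}(r_k)$, and with $r_m$ itself possibly moving on a stale $\Delta$ value) while still bounding everything in the asynchronous schedule is the delicate step, and it is also what makes the clean ``mutual visibility just before $t_0$'' claim of the first plan the part most in need of careful justification.
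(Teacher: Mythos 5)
There is a genuine gap, and it sits exactly where you flag it. Your first plan reduces the lemma to Lemma~\ref{lemma-5} by a limiting argument at the first blocking time $t_0$, but that reduction is circularly unavailable: the quantitative machinery behind Lemma~\ref{lemma-5} is the chain $DISP(r_ir_k)\le\theta(r_i)\le\frac{1}{n^2}\angle r_ir_kr_j$, and this chain holds only because $\beta(r_i)$ is computed over angles subtended by robots in ${\cal V}(r_i)$. Under the hypothesis $r_k\notin{\cal V}(r_i)$, the destination $\hat r_i$ that carries $r_i$ onto $\overline{r_jr_k}$ was computed from a snapshot in which $r_k$ was invisible, so $\Delta(r_i)$ and $\theta(r_i)$ are \emph{not} controlled by any angle at $r_k$; the in-flight leg of motion that produces the collinearity at $t_0$ is precisely the leg your invocation of Lemma~\ref{lemma-5} cannot bound. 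Declaring the triple ``mutually visible just before $t_0$'' does not repair this, both because mutual visibility at that instant is itself unestablished (the blocker on $\overline{r_kr_i}$ may persist arbitrarily close to the collinearity event), and because Lemma~\ref{lemma-5}'s proof needs the visibility to hold at the time the moving robots took their snapshots, not merely at the limit. Your second, quantitative plan founders on the same point, as you concede: the drift contributed by $r_i$ cannot be charged against $\beta(r_k)$, and the proposal to route it through the intermediate blocker $r_m$ is stated as a hope, not carried out. So neither sketch closes.

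The paper avoids all angle-shrinkage bookkeeping here and argues by a static separation. Let $r_l$ be the robot nearest to $r_i$ with $r_k$ on $\mathcal{L}_{r_ir_l}$ (the blocker). If $Bisec(r_i)$ does not meet $\overline{r_jr_l}$ there is nothing to prove; otherwise $r_j$ is an immediate neighbor of $r_l$ on $STR({\cal V}(r_i))$, and the paper shows, using Lemma~\ref{lemma6} together with the fact that the motions of $r_i$, $r_j$, $r_l$ are confined by the edges and chords of the polygon on $\{r_j,r_l,r'_l,r_i,r'_j\}$, that $r_i$ never crosses the line $\mathcal{L}_{r_jr_l}$. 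Since the segment $\overline{r_jr_k}$ lies on the far side of that line from $r_i$, the robot $r_i$ can never reach it, hence can never block $r_j$ from $r_k$; the subcase $r_j\notin{\cal V}(r_i)$ is handled the same way with the line through $r_l$ and the additional obstructor $r_m$. If you want to salvage your approach, the missing ingredient is exactly such a separating invariant for the stale-$\Delta$ leg of $r_i$'s motion; without it, the reduction to Lemma~\ref{lemma-5} does not go through.
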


      \begin{figure}[h]
   \centering
    \includegraphics[scale = .7]{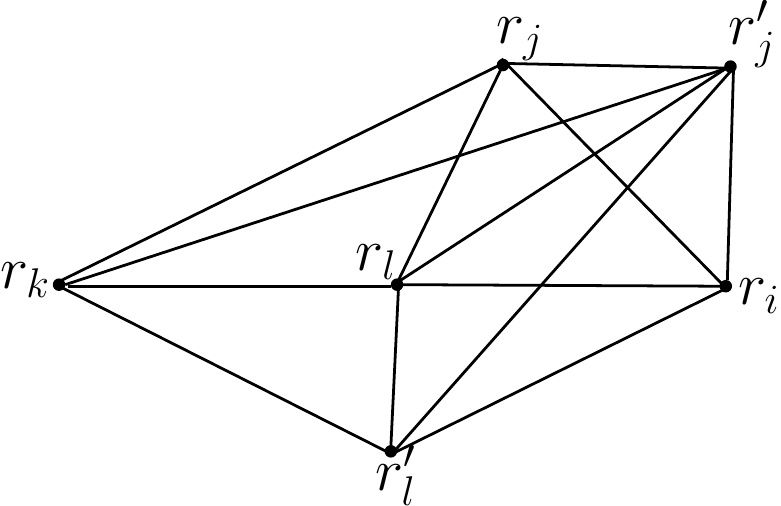}
    \caption{An example for lemma 7}
    \label{lemma-7}
   \end{figure}
 \begin{proof}
 Let $r_j \in {\cal V}(r_i) \cap  {\cal V}(r_k) $. Suppose $r_l$ be the nearest robot of $r_i$ such that $r_k$ lie on  $\mathcal{L}_{r_ir_l}$ (Figure \ref{lemma-7}). If $Bisec(r_i)$ does not intersect $\overline{r_jr_l}$, there is no possibility that $r_i$ 
will block the vision between $r_j$ and $r_k$. Let $Bisec(r_i)$ intersect $\overline{r_jr_l}$. Then  $r_j$ is one of the immediate neighbor of $r_l$ on $STR({\cal V}(r_i))$.
Let $r'_j$ and $r'_l$ be the other immediate neighbors of $r_j$ and $r_l$ respectively on $STR({\cal V}(r_i))$.  First we prove that $r_i$ will always lie on the same side of 
$\mathcal{L}_{r_jr_l}$ as it is  initially even if $r_i$, $r_j$, $r_k$ and $r_j$ move. By lemma $\ref{lemma6}$ and the observation that the movements of  $r_i$, $r_j$, $r_l$ are bounded by the edges and  chords
of the polygon formed by $\{r_j,r_l,r'_l,r_i,r'_j\}$, we conclude $r_i$ never crosses the line $\mathcal{L}_{r_jr_l}$. To block the vision between $r_k$ and $r_j$, $r_i$
has to move on the line segment $\overline{r_kr_j}$. Since $r_i$ and line segment $\overline{r_jr_k}$ lies
 on different sides of $\mathcal{L}_{r_jr_l}$, $r_i$ will never block the vision between $r_k$ and $r_j$.
  Let $r_j \notin {\cal V}(r_i).$ Then there is a robot $r_m$ which creates visual obstruction between $r_i$ and $r_j$. Now the movement of $r_i$ is bounded by the line 
 $\mathcal{L}_{r_lr_m}$ and hence the lemma.  
 \end{proof}

\begin{lemma}
\label{lemma-8}
 If at any time $t$, $r_j \in {\cal V}(r_i)$, then at $t' (>t)$, $r_j \in {\cal V}(r_i)$  even if $r_i$ changes its position. 
\end{lemma}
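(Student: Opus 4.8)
The plan is to reduce the claim to the statement that no third robot can ever intrude onto the segment joining $r_i$ and $r_j$, and then dispose of every candidate intruder by appealing to Lemma \ref{lemma-5} and Lemma \ref{lemma7}. First I would observe that $r_j$ can leave $\mathcal{V}(r_i)$ only when some $r_m \in \mathcal{R}\setminus\{r_i,r_j\}$ comes to lie strictly between $r_i$ and $r_j$ on the segment joining their current positions; equivalently, when $r_i,r_j,r_m$ become collinear with $r_m$ the middle point. Since the robot set is fixed, it therefore suffices to show that no $r_m$ ever reaches the interior of $\overline{r_ir_j}$, even while $r_i$ (and possibly $r_j$ and $r_m$) are in motion.

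Fixing the instant $t$ at which $r_j\in\mathcal{V}(r_i)$, I would classify each candidate blocker $r_m$ by its visibility. In the first case, $r_m\in\mathcal{V}(r_i)\cap\mathcal{V}(r_j)$. Here $r_i,r_j,r_m$ are pairwise visible, and three pairwise-visible robots cannot be collinear: if they were, the middle one would obstruct the two outer ones, contradicting mutual visibility. Hence $\{r_i,r_j,r_m\}$ form a non-degenerate triangle, and Lemma \ref{lemma-5} guarantees they never become collinear throughout the execution. In particular $r_m$ never enters the line $\mathcal{L}_{r_ir_j}$, and a fortiori never reaches the segment.

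In the remaining case $r_m$ fails to see at least one of $r_i,r_j$; say $r_m\notin\mathcal{V}(r_i)$, the case $r_m\notin\mathcal{V}(r_j)$ being symmetric under swapping the roles of $r_i$ and $r_j$. I would then invoke Lemma \ref{lemma7} under the relabelling in which $r_m$ plays the blocker (Lemma \ref{lemma7}'s $r_i$), our $r_i$ plays the robot unseen by the blocker (its $r_k$), and our $r_j$ plays the visible robot (its $r_j$): since $r_i\notin\mathcal{V}(r_m)$ and $r_j\in\mathcal{V}(r_i)$, the hypotheses of Lemma \ref{lemma7} are met, and its conclusion is exactly that $r_m$ will not block the vision between $r_i$ and $r_j$ at any time. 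As every robot of $\mathcal{R}\setminus\{r_i,r_j\}$ falls into one of these two cases, no robot ever obstructs the segment, so $r_j\in\mathcal{V}(r_i)$ persists for all $t'>t$.

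I expect the main obstacle to be the asynchronous bookkeeping: the visibility status of a candidate $r_m$ may itself drift over time, so I must argue that the classification made at $t$ remains meaningful. The clean way out is that the conclusions of Lemmas \ref{lemma-5} and \ref{lemma7} are \emph{global} (they hold ``during the whole execution''), and a robot can only begin to threaten the segment after crossing into a region those two lemmas forbid it from entering. The genuinely delicate point is that the endpoints $\hat r_i,\hat r_j$ are themselves moving, so the ``segment'' to be kept clear is time-varying; reconciling this with the fixed-triangle hypothesis is precisely what the moving-robot collinearity argument of Lemma \ref{lemma-5} is designed to handle, which is why I route the first case through it rather than through a static argument.
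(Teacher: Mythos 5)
Your proposal is correct and matches the paper's argument: the paper's own proof is the one-liner ``immediate from Lemmas \ref{lemma-5} and \ref{lemma7},'' and your case split (a mutually visible, non-collinear triple handled by Lemma \ref{lemma-5}; a blocker not seeing an endpoint handled by Lemma \ref{lemma7} under exactly the relabelling you give) is precisely the fleshed-out version of that citation. No gap; if anything, your write-up makes explicit the visibility-symmetry and case-exhaustiveness details the paper leaves implicit.
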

\begin{proof}
The proof is immediate from \ref{lemma-5} and \ref{lemma7}.
 \end{proof}

\begin{lemma}
\label{lemma-9}
Cardinality of ${\cal V}(r_i)$ is strictly increasing.
\end{lemma}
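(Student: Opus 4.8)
The plan is to decompose the claim into two parts: a monotonicity part (the set $\mathcal{V}(r_i)$ never loses a member over time) and a strict-progress part (whenever a move is executed, the vision set of at least one robot gains a member). The monotonicity part follows almost immediately from Lemma \ref{lemma-8}: if $r_j \in \mathcal{V}(r_i)$ at some time $t$, then $r_j \in \mathcal{V}(r_i)$ at every later time $t' > t$, no matter whether $r_i$ or $r_j$ move. Hence, for every robot $r_i$, the quantity $|\mathcal{V}(r_i)|$ is non-decreasing in time: no visibility edge is ever destroyed once it exists.

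For the strict part, I would use the selection rule of the algorithm. A robot $r_i$ is moved only when $COL(r_i) \ne \phi$, i.e. $r_i$ lies in the interior of some segment $\overline{r_p r_q}$ and thereby blocks the mutual visibility of $r_p$ and $r_q$. The algorithm displaces $r_i$ a positive distance $\Delta(r_i) > 0$ along $Bisec(r_i)$, a ray that points strictly into an angular gap at $r_i$ and hence off the line $\mathcal{L}_{r_p r_q}$. I would argue that after this displacement $r_i$ no longer lies on $\overline{r_p r_q}$, so that $r_p$ and $r_q$ become mutually visible; consequently $|\mathcal{V}(r_p)|$ and $|\mathcal{V}(r_q)|$ each strictly increase by at least one. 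Combined with monotonicity, this shows that the total visibility count $\sum_{\ell} |\mathcal{V}(r_\ell)|$ strictly increases on every move, and that $|\mathcal{V}(r_i)|$ grows strictly precisely in the rounds where $r_i$ is one of the robots being unblocked.

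The main obstacle is to guarantee that these gains are \emph{net} gains, never cancelled by a newly created obstruction or collinearity produced by the very same motion. Here I would invoke Lemma \ref{lemma7}, which ensures that a moving robot never starts blocking a pair that was previously unblocked, and Lemma \ref{lemma-5}, which ensures that three mutually visible, non-collinear robots never become collinear during the execution. Together these two lemmas rule out the loss of any visibility edge while a move is revealing a new one, so the monotonicity established above is not merely preserved but strictly improved at each move.

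Finally I would tie the pieces together into the global statement: since the number of robots is finite, the count $\sum_{\ell} |\mathcal{V}(r_\ell)|$ is bounded above by $n(n-1)$, and by the argument above it strictly increases with every move until no robot satisfies $COL(r_i) \ne \phi$. Thus the cardinality of $\mathcal{V}(r_i)$ is increasing, strictly so whenever an obstruction in front of $r_i$ is removed, which is exactly the progress measure needed to conclude that general position is reached in finite time.
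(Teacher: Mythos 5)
Your proposal is correct and follows essentially the same route as the paper: the monotonicity half is exactly the paper's one-line proof of this lemma (Lemmas \ref{lemma-5}, \ref{lemma7} and \ref{lemma-8} guarantee no visibility edge is ever lost), while your strict-progress half and the bound $\sum_{\ell}|\mathcal{V}(r_\ell)| \le n(n-1)$ reproduce what the paper defers to Lemma \ref{lemma-10} and the final counting lemma. The only refinement worth making explicit is that $r_p$ and $r_q$ should be taken as the immediate collinear neighbours of $r_i$ on the blocking line (so both lie in $\mathcal{V}(r_i)$ and $r_i$ is their \emph{sole} obstruction), since otherwise moving $r_i$ off the segment need not render the pair mutually visible.
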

\begin{proof}
 Lemma $\ref{lemma-5}$, $\ref{lemma7}$ and $\ref{lemma-8}$ imply the proof.
\end{proof}

\begin{lemma}
\label{lemma-10}
 There exist at least two robots $r_j, r_k \in \cal R$ for which ${\cal V}(r_j)$ and ${\cal V}(r_j)$ increase whenever $r_i$  changes its position.
\end{lemma}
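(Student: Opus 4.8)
The plan is to exploit the selection rule of the algorithm: a robot $r_i$ is moved only when $COL(r_i)\neq\phi$, i.e.\ $r_i$ is a collinear middle robot that obstructs the mutual visibility of at least one pair. First I would fix a straight line $\mathcal{L}$ through $r_i$ witnessing this obstruction and let $r_j,r_k$ be the two robots \emph{immediately} adjacent to $r_i$ on $\mathcal{L}$, one on each side. Since $r_i$ lies strictly between them and is the only robot on the open segment $\overline{r_jr_k}$, before the move we have $r_i\in{\cal V}(r_j)$ and $r_i\in{\cal V}(r_k)$, but $r_k\notin{\cal V}(r_j)$ and $r_j\notin{\cal V}(r_k)$. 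These are the two robots I will exhibit.

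Next I would show that the move pushes $r_i$ off $\mathcal{L}$. The destination $\hat r_i$ lies on $Bisec(r_i)$, the bisector of $\alpha(r_i)$. Because $\angle r_jr_ir_k=\pi$, this straight angle is excluded by the definition of $\alpha(r_i)$ (it is $<\pi$ whenever $\beta(r_i)\neq 0$, and is the perpendicular straight-angle direction in the degenerate all-collinear case $\beta(r_i)=0$ handled by Case~2 of \emph{ComputeDestination}). Moreover, since $r_j$ and $r_k$ are themselves vertices of $STR(r_i)$ lying in opposite directions, the bisector of $\alpha(r_i)$—whose sides point toward two consecutive vertices—cannot coincide with the direction of any vertex, so $DIR(r_i)$ is transverse to $\mathcal{L}$. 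As the motion is non-rigid but covers at least $\delta>0$, the robot leaves $\mathcal{L}$ and hence the segment $\overline{r_jr_k}$. Using Lemma~\ref{lemma-5} (no fresh collinearity with $r_j,r_k$ is created) the open segment $\overline{r_jr_k}$ is then empty, so at that instant $r_j$ and $r_k$ become mutually visible: $r_k\in{\cal V}(r_j)$ and $r_j\in{\cal V}(r_k)$.

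Finally, I would invoke Lemma~\ref{lemma-8} to turn this instantaneous gain into a permanent one: once $r_j$ sees $r_k$ it keeps seeing it, and no previously visible robot of $r_j$ or $r_k$ is lost. Combined with $r_k\notin{\cal V}(r_j)$ before the move, this makes $|{\cal V}(r_j)|$ strictly increase, and symmetrically $|{\cal V}(r_k)|$ strictly increase. Thus the two robots $r_j,r_k$ satisfy the claim whenever $r_i$ changes its position.

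The step I expect to be the main obstacle is establishing rigorously that $DIR(r_i)$ is transverse to $\mathcal{L}$ in \emph{every} case, in particular ruling out that $Bisec(r_i)$ could point along $\mathcal{L}$ toward $r_j$ or $r_k$. This requires a careful reading of $\Gamma(r_i)$, $\alpha(r_i)$ and $Bisec(r_i)$: a vertex of $STR(r_i)$ cannot lie strictly inside the sector $\alpha(r_i)$ between two consecutive vertices, and the degenerate $\beta(r_i)=0$ configuration must be treated separately through Case~2, where $r_i$ departs perpendicular to the common line.
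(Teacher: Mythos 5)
Your proof is correct and reaches the paper's conclusion by the same basic route: exhibit the pair $(r_j,r_k)$ sandwiching $r_i$ on the witnessing line, argue the blocking disappears when $r_i$ moves, and use Lemmas~\ref{lemma-5}, \ref{lemma7} and \ref{lemma-8} to make the gain in $|\mathcal{V}(r_j)|$ and $|\mathcal{V}(r_k)|$ strict and permanent. The differences are in where the effort goes. The paper's proof is a three-sentence sketch split into two subcases: if $r_j$ and $r_k$ are stationary, their visibility sets grow because no robot can reach $\overline{r_jr_k}$ (citing Lemmas~\ref{lemma-5} and \ref{lemma7}); if $r_j$ or $r_k$ also move, it merely asserts that one member of $COL(r_j)$ and one member of $COL(r_k)$ become mutually visible. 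You instead give a single uniform argument, and your genuinely new ingredient is the transversality step: since both directions of $\mathcal{L}$ from $r_i$ point at vertices of $STR(r_i)$ (namely $r_j$ and $r_k$), while $Bisec(r_i)$ lies strictly inside a sector between two \emph{consecutive} vertices (with the degenerate all-collinear case exiting perpendicularly via Case~2 of \emph{ComputeDestination}), any positive displacement takes $r_i$ off $\mathcal{L}$. The paper never proves this anywhere --- it only asserts in Section~\ref{algo} that ``each movement of $r_i$ breaks at least one initial collinearity'' --- so your proposal closes a real gap in the paper's reasoning. One caveat, which applies equally to the paper's own proof: your appeal to Lemma~\ref{lemma-5} to certify that $\overline{r_jr_k}$ stays empty covers only triples that are \emph{initially non-collinear and mutually visible}, so a fourth robot initially lying on $\mathcal{L}$ outside the segment $\overline{r_jr_k}$ is, strictly speaking, handled by neither Lemma~\ref{lemma-5} nor Lemma~\ref{lemma7}; excluding it requires the displacement bounds $\Delta(\cdot)$ directly, a step both you and the paper leave implicit.
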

\begin{proof}
$r_i$ moves whenever $r_i$ is  collinear with at least one pair of robots, ($r_j$, $r_k$), and $r_i$ lies in between those robots. 
If $r_j$ and $r_k$ do not move then ${\cal V}(r_j)$ and ${\cal V}(r_k)$ increase whenever $r_i$ moves because no robot can reach $\overline{r_jr_k}$ due to the 
facts stated in lemma $\ref{lemma-5}$ and $\ref{lemma7}$. When either $r_j$ or $r_k$ or both $r_i$ and $r_k$ moves, one member of $COL(r_j)$ and 
one member of $COL(r_k)$ can see each other. Hence the lemma.
\end{proof}

\subsection{Moving the robots to obtain general position}

Next we will discuss the algorithm $MakeGenaralPosition()$, by which the robots in $\cal R$ move to obtain full visibility. The robots in $R_I$ which create obstacle to other robots and the robots in $R_{EE}$ are eligible for movement by this algorithm. The robots compute destinations using $ComputeDestination()$ and move towards it. The robots keep on executing the algorithm till there exist no three collinear robots in $\cal R$.

\begin{algorithm}
\KwIn{$\cal R$, a set of robots with their positions.}
\KwOut{$\cal \hat R$,  which is in general position.}
\While{$r_i\in R_{EE}$ $\vee$ ($r_i \in R_I \wedge COL(r_i) \ne \phi$)}{
\begin{enumerate}
\item $T(r_i) \leftarrow ComputeDestination(r_i)$\;
\item Move to $T(r_i)$\;
\item Compute $COL(r_i)$\;
\end{enumerate}
}

\caption{MakeGenaralPosition()}
\end{algorithm}
 
\paragraph{\bf Proof of Correctness of algorithm MakeGenaralPosition()}
The algorithm assures that the robot will form general position in finite number of movements.
The termination of the algorithm is established by following observation and lemmas.

\begin{obs}
  $ComputeDestination$ is not executed by a robot  $r_l\in R$ if  $r_l \in R_{EV}\vee (R_I\wedge COL(r_l)= \phi)$.
 \end{obs}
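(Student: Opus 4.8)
The plan is to read the observation directly off the control structure of \emph{MakeGenaralPosition()}. In that algorithm a robot invokes \emph{ComputeDestination()} only inside the body of the \texttt{while} loop, and the body is entered precisely when the active robot $r_i$ satisfies the loop guard
\[
C(r_i):\quad r_i\in R_{EE}\ \vee\ \bigl(r_i\in R_I\wedge COL(r_i)\ne\phi\bigr).
\]
Hence it suffices to show that every robot $r_l$ meeting the hypothesis of the observation, namely
\[
D(r_l):\quad r_l\in R_{EV}\ \vee\ \bigl(r_l\in R_I\wedge COL(r_l)=\phi\bigr),
\]
fails the guard, i.e.\ $D(r_l)\Rightarrow\neg C(r_l)$.

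The key structural fact I would use is that the three classes $R_{EV}$, $R_{EE}$ and $R_I$ partition $\mathcal R$: by their geometric definition every robot lies either at a vertex of $\mathcal{CH(R)}$, on an edge of $\mathcal{CH(R)}$, or strictly in its interior, and these three possibilities are mutually exclusive and exhaustive. Given this partition, the argument reduces to a short case analysis on which disjunct of $D(r_l)$ holds.

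First, if $r_l\in R_{EV}$, then disjointness of the classes gives $r_l\notin R_{EE}$ and $r_l\notin R_I$, so both disjuncts of $C(r_l)$ are false (this is consistent with the defining remark that external vertex robots obstruct no one, so in fact $COL(r_l)=\phi$ here as well). Second, if $r_l\in R_I$ with $COL(r_l)=\phi$, then $r_l\notin R_{EE}$ by disjointness, which kills the first disjunct of $C(r_l)$, while the conjunct $COL(r_l)\ne\phi$ in the second disjunct is violated by hypothesis. In either case $C(r_l)$ evaluates to false, the loop body is not entered for $r_l$, and \emph{ComputeDestination()} is never executed on its behalf.

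I do not expect any genuine obstacle here: the observation is essentially a rephrasing of the loop guard of \emph{MakeGenaralPosition()}, and the only point requiring (trivial) justification is that $R_{EV}$, $R_{EE}$ and $R_I$ are pairwise disjoint, which is immediate from the classification of robots by position relative to $\mathcal{CH(R)}$.
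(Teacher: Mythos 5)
Your proposal is correct and coincides with the paper's (implicit) reasoning: the paper states this observation without proof, treating it as an immediate reading of the \texttt{while}-loop guard of \emph{MakeGenaralPosition()}, which is exactly what you unpack via the pairwise disjointness of $R_{EV}$, $R_{EE}$ and $R_I$. Your only added care --- noting that $R_{EE}$ must be read as edge-but-not-vertex robots so the three classes genuinely partition $\mathcal{R}$ --- is consistent with the paper's characterization of the classes via $STR(r_i)$ and is a harmless, correct elaboration.
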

 
\begin{lemma}
\label{finitecolphi}
 $COL(r_i)$ will be $\phi$ in finite time. 
\end{lemma}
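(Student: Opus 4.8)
The plan is to show that the quantity $|COL(r_i)|$, the number of robots for which $r_i$ creates an obstruction, strictly decreases over the course of the algorithm and cannot increase, so it must reach $\phi$ in finitely many rounds. The key input is the collection of lemmas already established: Lemma~\ref{lemma-5} guarantees that three mutually visible, non-collinear robots never become collinear, and Lemma~\ref{lemma7} guarantees that a moving robot $r_i$ never inserts itself between a pair $r_j, r_k$ whose mutual visibility it was not already obstructing. Together these say that $r_i$ can never \emph{create} a new collinearity, i.e. $COL(r_i)$ never gains a new element once the algorithm starts running. Thus $|COL(r_i)|$ is non-increasing.

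Next I would argue that $|COL(r_i)|$ strictly decreases whenever $r_i$ actually moves. By the selection rule, $r_i$ moves only when it is the middle robot of some collinear triple, so $COL(r_i) \neq \phi$ at that moment. The destination $\hat r_i$ computed by $ComputeDestination()$ lies on $Bisec(r_i)$ at distance $\Delta(r_i) > 0$ (Case~2 of the algorithm handles $\beta(r_i)=0$), so $r_i$ genuinely leaves its current position. I would invoke Lemma~\ref{lemma-10}, which states that at least two robots $r_j, r_k$ gain visibility whenever $r_i$ moves; in particular the pair $(r_j,r_k)$ that $r_i$ was blocking becomes mutually visible, so $r_i$ is removed from $COL(r_i)$ for that pair. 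By Lemma~\ref{lemma-8} and Lemma~\ref{lemma-9}, once a pair sees each other through the former obstruction, it remains visible, so the element removed from $COL(r_i)$ never returns. Hence each move of $r_i$ drops $|COL(r_i)|$ by at least one and no move ever increases it.

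Finally I would combine these two facts into a termination argument. Since $|COL(r_i)|$ is bounded above by $n-1$ initially, is non-increasing, and strictly decreases at every move of $r_i$, the robot $r_i$ can move only finitely many times (at most its initial $|COL(r_i)|$ times, which is at most $(n-1)/2$ as noted in Case~3.2). Because the scheduler is fair — every robot becomes active in finite time — and each move takes finite time (the non-rigidity assumption only guarantees progress of at least $\delta$, but Lemma~\ref{theta}'s bounded $\Delta$ ensures a destination is reached in finitely many activations), $r_i$ exhausts its finitely many moves in finite time, after which $COL(r_i) = \phi$ and stays $\phi$.

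The main obstacle I anticipate is the asynchronous bookkeeping: a robot $r_j$ may have computed its $\Delta(r_j)$ against a stale configuration, so one must be careful that the ``non-increasing'' claim for $COL(r_i)$ still holds when several robots move with inconsistent snapshots. This is exactly the content of Case~3.2 of Lemma~\ref{lemma-5}, where the geometric inclination of $\mathcal{L}_{r_sr_l}$ toward a third robot is bounded even under $(n-1)/2$ interleaved moves; I would lean on that case to certify that no new collinearity sneaks in during the asynchronous interval between a snapshot and the corresponding move. Making the finiteness of the \emph{total} number of moves across all robots (rather than per robot) clean will require noting that each move permanently converts one obstructed pair into a visible one and that visibility is monotone by Lemma~\ref{lemma-9}, so the global count of obstructed pairs is a genuine decreasing potential.
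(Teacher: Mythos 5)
Your proposal is correct and follows essentially the same route as the paper: bound the initial cardinality of $COL(r_i)$ by $n-1$, use Lemmas~\ref{lemma-5} and~\ref{lemma7} to rule out any new collinearity (so the set never grows), and observe that each move of $r_i$ permanently breaks at least one blocked pair, giving termination in finitely many iterations. The only cosmetic difference is that the paper counts the decrement as at least two per iteration (both endpoints $r_j$ and $r_k$ of a broken triple leave $COL(r_i)$ simultaneously), yielding the sharper bound of $\frac{n-1}{2}$ iterations, whereas your ``at least one'' gives $n-1$; both suffice for finiteness.
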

\begin{proof}
In the initial configuration the number of robots in $COL(r_i)$ is upper bounded by $n-1$. During the whole  execution of our algorithm no new collinearity is created and for each iteration cardinality of $COL(r_i)$ is reduced by
at least two. Hence after at most $\frac{n-1}{2}$ number of iterations of the while loop in the above algorithm, $COL(r_i)$ will become null. 
\end{proof}

\begin{lemma}
 $\forall r_i, {\cal V}(r_i)$ will be $(n-1)$ in finite number of execution of the cycle. 
\end{lemma}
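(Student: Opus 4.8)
The plan is to exhibit a bounded, non-negative integer potential that strictly decreases with every completed robot movement and whose vanishing is equivalent to the desired conclusion. First I would set
\[
\Phi = \sum_{r_i \in \mathcal{R}} \bigl(n-1-|{\cal V}(r_i)|\bigr),
\]
the total number of ordered pairs $(r_i,r_j)$ for which $r_i$ cannot see $r_j$. Because visibility is symmetric (as recorded in the definition of ${\cal V}(r_i)$, where $r_j \in {\cal V}(r_i)$ iff $r_i \in {\cal V}(r_j)$), $\Phi$ is twice the number of mutually invisible unordered pairs. Hence $\Phi$ is a non-negative even integer with $\Phi \le n(n-1)$ initially, and $\Phi = 0$ exactly when every robot sees the other $n-1$ robots, i.e. $|{\cal V}(r_i)| = n-1$ for all $i$.

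Next I would establish monotonicity and strict progress. By Lemma \ref{lemma-8} no robot ever drops a robot from its vision, so no summand $|{\cal V}(r_i)|$ ever decreases and $\Phi$ is non-increasing throughout the execution. A robot $r_i$ is activated to move only when it is an external edge robot or an internal robot with $COL(r_i) \ne \phi$, that is, only when it obstructs some collinear pair. By Lemma \ref{lemma-10}, whenever such an $r_i$ changes its position there are two robots $r_j, r_k$ whose visions strictly increase, so each completed move lowers $\Phi$ by at least $2$. Since $r_i$ travels at least $\delta > 0$ along $Bisec(r_i)$, and $DIR(r_i)$ is not collinear with the obstructed pair, even a non-rigid (partial) move takes $r_i$ off the blocked segment and realizes this gain; thus non-rigidity does not invalidate the decrease.

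Finally I would combine these facts with the fairness of the scheduler. As long as $\Phi > 0$ the configuration is not in general position, so an obstructing robot (a member of $R_{EE}$, or a member of $R_I$ with $COL(r_i) \ne \phi$) exists; the guarantee that every robot is activated in finite time rules out starvation, so this robot eventually executes a move that decreases $\Phi$ by at least $2$. As $\Phi$ is bounded below by $0$, cannot increase, and drops by a fixed positive amount on each such move, it reaches $0$ after finitely many moves. One may further quantify this using Lemma \ref{finitecolphi}, which bounds the number of moves of each individual robot by $\tfrac{n-1}{2}$, so the total number of moves is at most $\tfrac{n(n-1)}{2}$. When $\Phi = 0$, no three robots are collinear, and therefore $|{\cal V}(r_i)| = n-1$ for every $r_i$.

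The main obstacle is the asynchronous reasoning behind monotonicity: I must be sure that stale snapshots and overlapping, partial moves never create a \emph{new} collinearity that re-blocks an already-visible pair, which would spoil the claimed non-increase of $\Phi$. This is exactly what Lemmas \ref{lemma-5} and \ref{lemma7} supply, so the crux of the write-up is to invoke them to certify that each $|{\cal V}(r_i)|$ is genuinely non-decreasing under the adversarial schedule, after which the potential argument closes the proof.
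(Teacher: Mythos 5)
Your proof is correct and essentially the same as the paper's: your potential $\Phi$ is just the complement of the paper's quantity $\eta=|\bigcup_i {\cal V}(r_i)|$ (the number of mutually visible pairs, which the paper shows increases to its maximum $\frac{n(n-1)}{2}$ by Lemmas \ref{lemma-9} and \ref{lemma-10}), so your ``decrease to $0$'' argument mirrors its ``increase to $\frac{n(n-1)}{2}$'' argument using the same key lemmas. Your additional remarks on scheduler fairness, non-rigid moves, and the stale-snapshot issue (via Lemmas \ref{lemma-5} and \ref{lemma7}) make explicit details the paper leaves implicit, but the route is the same.
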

 \begin{proof}
 Let $\eta= |\bigcup_{i=1}^n{\cal V}(r_i)|$. The algorithm for a  robot $r_i$ terminates whenever $|{\cal V}(r_i)|$ reaches the value $n-1$. Hence the algorithm for all robots 
 terminates when $\eta=\frac{n(n-1)}{2}$ which is a finite integer. By lemma $\ref{lemma-9}$ and $\ref{lemma-10}$ the value of $\eta$ increases whenever any robot moves. Hence
 after finite number of execution cycles $\eta$ reaches its maximum value $\frac{n(n-1)}{2}$.
 \end{proof}

From the above results, we can conclude the following theorem:

\begin{theorem}
 A set of asynchronous, oblivious robots (initially not in general position) without agreement in common chirality, can form general position in finite time. 
\end{theorem}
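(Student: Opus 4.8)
The plan is to assemble the invariants established in Lemmas \ref{lemma-5} through \ref{lemma-10}, together with the termination lemmas, into a single statement. The conceptual bridge I would make explicit first is the equivalence between \emph{full mutual visibility} and \emph{general position}: the configuration $\cal R$ is in general position if and only if $|{\cal V}(r_i)|=n-1$ for every $r_i$. Indeed, if some triple $r_a,r_b,r_c$ were collinear with $r_b$ between the other two, then $r_b$ would obstruct $r_a$ from $r_c$, forcing $r_a\notin{\cal V}(r_c)$ and hence $|{\cal V}(r_c)|<n-1$; conversely, if no three robots are collinear then no robot lies on a segment joining two others, so every robot sees all $n-1$ others. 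Thus it suffices to drive each $|{\cal V}(r_i)|$ up to $n-1$.

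Second, I would record the invariants that guarantee the process is monotone and never regresses, regardless of the adversarial asynchronous schedule and the absence of a common chirality: by Lemma \ref{lemma-5} no initially non-collinear triple ever becomes collinear, so the algorithm creates no new obstruction; by Lemma \ref{lemma7} a moving robot never steps onto a segment whose endpoints it was not already obstructing; and consequently by Lemmas \ref{lemma-8} and \ref{lemma-9} every ${\cal V}(r_i)$ is non-decreasing, and strictly increasing at the relevant moves. Since $Bisec(r_i)$, $\alpha(r_i)$, and $\Delta(r_i)$ are defined purely from angles and distances measured locally at $r_i$, no agreement on orientation or handedness is used anywhere, which is where the ``without common chirality'' clause is discharged.

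Third, for termination I would use the potential $\eta=|\bigcup_{i=1}^n{\cal V}(r_i)|$, a non-negative integer bounded above by $\frac{n(n-1)}{2}$. By Lemma \ref{lemma-10}, whenever any eligible robot moves, at least two of the sets ${\cal V}(\cdot)$ strictly gain a member, so $\eta$ strictly increases with each move; by the non-rigidity assumption each activated robot advances at least $\delta>0$ and so reaches its destination in finitely many cycles, and by Lemma \ref{finitecolphi} each robot is eligible for at most $\frac{n-1}{2}$ moves. Hence $\eta$ attains its maximum $\frac{n(n-1)}{2}$ after finitely many cycles, at which point $|{\cal V}(r_i)|=n-1$ for all $i$, and by the equivalence of the first paragraph $\cal R$ is in general position. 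Collision-freeness, claimed as a by-product, follows from the fact that each move has length $\Delta(r_i)\le D(r_i)$ together with Lemma \ref{lemma6} (no robot crosses another's bisector), so distinct robots never occupy the same point.

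I would expect the substantive difficulty to be entirely front-loaded into the lemmas rather than the theorem. The genuinely hard part is not the final bookkeeping but verifying that the monotonicity invariants survive the asynchronous schedule, where a robot may act on a stale snapshot and a stale $\Delta$ value (handled in Lemma \ref{lemma-5}, Case 3.2, and Lemma \ref{right-triangle-2}, Case 2). At the level of the theorem itself the only genuine reasoning step is the visibility/general-position equivalence; everything else reduces to a counting argument on the monotone integer potential $\eta$, bounded between its initial value and $\frac{n(n-1)}{2}$.
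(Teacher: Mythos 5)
Your proposal is correct and follows essentially the same route as the paper: the paper derives the theorem directly from its final two lemmas using exactly your potential $\eta=|\bigcup_{i=1}^n{\cal V}(r_i)|$ bounded by $\frac{n(n-1)}{2}$ and made strictly increasing by Lemmas \ref{lemma-9} and \ref{lemma-10}, with Lemmas \ref{lemma-5}--\ref{lemma-8} supplying the no-new-collinearity and monotone-visibility invariants. Your only addition is making explicit the equivalence between general position and full mutual visibility ($|{\cal V}(r_i)|=n-1$ for all $i$), which the paper uses implicitly; this is a harmless and indeed welcome clarification rather than a departure.
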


\section{Conclusion}
\label{con}
In this paper we have presented an algorithm for obtaining general position by a set of autonomous, homogeneous, oblivious, asynchronous robots having no common chirality. The algorithm assures the robots to have collision free movements. Another important feature of our algorithm is that the convex hull made by the robots in initial position, remains intact both in location and size. In other words, the robots do no go out side the convex hull formed by them. This feature can help in many subsequent pattern formations which require to maintain the location and size and of the pattern.    

Once the robots obtain general position, the next job could be to form any pattern maintaining the general position. Most of the existing pattern formation algorithms have assumed that the robots are see through. Thus, designing algorithms for forming patterns by maintaining general position of the robots, may be a direct extension of this work.   


\end{document}